\theoremstyle{thmstyleone}
\newtheorem{theorem}{Theorem}
\newtheorem{lemma}[theorem]{Lemma}
\theoremstyle{thmstylethree}
\newtheorem{assumption}{Assumption}
\newtheorem{problem}{Problem}
\newtheorem{definition}{Definition}
\newcommand{\proc}[1]{\textup{\textsf{#1}}}
\begin{document}

\title[Article Title]{Quantum algorithm for solving McKean-Vlasov stochastic differential equations}

\author*[1]{\fnm{Koichi} \sur{Miyamoto}}\email{miyamoto.kouichi.qiqb@osaka-u.ac.jp}

\affil*[1]{\orgdiv{Center for Quantum Information and Quantum Biology}, \orgname{The University of Osaka}, \orgaddress{\city{Toyonaka}, \state{Osaka}, \country{Japan}}}

\abstract{Quantum Monte Carlo integration (QMCI), a quantum algorithm for calculating expectations that provides a quadratic speed-up compared to its classical counterpart, is now attracting increasing interest in the context of its industrial and scientific applications.
In this paper, we propose the first application of QMCI to solving McKean-Vlasov stochastic differential equations (MVSDEs), a nonlinear class of SDEs whose drift and diffusion coefficients depend on the law $\mu_t$ of the solution $X_t$---appearing in fields such as finance and fluid mechanics.
We focus on the problem setting where the coefficients depend on $\mu_t$ through expectations of some functions $\mathbb{E}[\varphi_k(X_t)]$, and the goal is to compute the expectation of a function $\mathbb{E}[\phi(X_T)]$ at a terminal time $T$.
We devise a quantum algorithm that leverages QMCI to compute these expectations, combined with a high-order time discretization method for SDEs and extrapolation of the expectations in time.
The proposed algorithm estimates $\mathbb{E}[\phi(X_T)]$ with accuracy $\epsilon$, making $O(1/\epsilon^{1+2/p})$ queries to the quantum circuit for time evolution over one step, where $p\in(1,2]$ is the weak order of the SDE discretization method.
This demonstrates the speed-up over the well-known classical algorithm called the particle method with complexity of $O(1/\epsilon^3)$.
We conduct numerical demonstrations of our quantum algorithm applied to examples of MVSDEs, with some parts emulated classically, and observe that the accuracy and complexity behave as expected.}

\keywords{Quantum algorithm, stochastic differential equation, quantum Monte Carlo integration}

\maketitle

\section{Introduction}
Applications of quantum algorithms to practical problems have been explored extensively.
One of such algorithms is quantum Monte Carlo integration (QMCI) \cite{montanaro2015} for estimating expectations of random variables.
It provides the quadratic speed-up compared to the classical counterpart: in estimation with accuracy $\epsilon$, QMCI makes $O(1/\epsilon)$ queries to a quantum circuit to encode the distribution of the random variable into a quantum state, while the classical Monte Carlo method has the sample complexity of $O(1/\epsilon^2)$. 
Applications of QMCI are widely explored in various fields such as derivative pricing in finance \cite{Rebentrost2018,Stamatopoulos2020optionpricingusing,Chakrabarti2021thresholdquantum,miyamoto2022bermudan,kaneko2022quantum,doriguello2022} and machine learning \cite{pmlr-v139-wang21w,wiedemann2023quantum,Wan_Zhang_Li_Zhang_Sun_2023,NEURIPS2023_401aa72e,hikima2024quantum}.
Following this trend, it is desirable to extend the scope of quantum applications to more general and advanced problem settings than those considered so far, thereby enhancing the benefits of quantum computing.

In this paper, we consider applying QMCI to McKean-Vlasov stochastic differential equations (MVSDE) \cite{McKean1966}, an extended type of stochastic differential equations (SDEs).
SDEs are widely used to describe the dynamics of random processes such as asset prices in derivative pricing.
With a suitably defined probability space, an SDE for an $\mathbb{R}^d$-valued random process $X_t$ is written as
\begin{align}
    dX_t=a(t,X_t) dt + b(t,X_t) dW_t,
\end{align}
where $W_t$ is a $m$-dimensional Brownian motion, and $a$ and $b$ are $\mathbb{R}^d$- and $\mathbb{R}^{d \times m}$-valued coefficient functions, respectively, and a typical goal is calculating an expectation $\mathbb{E}[\phi(X_T)]$ with some function $\phi:\mathbb{R}^d\rightarrow\mathbb{R}$ at a terminal time $T$.
MVSDEs are an extension of this in the following form:
\begin{align}
    dX_t=a_{\rm MV}(t,X_t,\mu_t) dt + b_{\rm MV}(t,X_t,\mu_t) dW_t.
\label{eq:MVSDEGen}
\end{align}
Now, the coefficient functions depend on $\mu_t$, the law of $X_t$.
In other words, the equation governing the solution depends on the solution itself, which makes MVSDEs nonlinear and addressing them more difficult than ordinary SDEs.
Nevertheless, MVSDEs are widely used to describe complicated dynamics of systems in various fields.
For example, derivative pricing under the stochastic local volatility model involving model calibration falls into an MVSDE~\cite{guyon2013nonlinear}.
Although technological advances such as GPUs have accelerated derivative pricing, further improving the accuracy and speed of pricing under advanced models, such as the SLV model, remains a long-lasting challenge in financial engineering.
MVSDEs also appear in fluid mechanics \cite{Bossy1997}, interacting particle systems \cite{Meleard1996}, and so on.

Although SDEs have been considered in the context of applications of QMCI, especially in derivative pricing \cite{Rebentrost2018,Stamatopoulos2020optionpricingusing,Chakrabarti2021thresholdquantum,kaneko2022quantum}, applying QMCI to MVSDEs has not been studied yet as far as the author knows.
As the first attempt in this direction, this paper considers a relatively simple setting, where the coefficient functions depend on $\mu_t$ through expectations of given functions, $\gamma_k(t)\coloneqq\mathbb{E}[\varphi_k(X_t)]$ \cite{Belomestny2018} (see Sec.~\ref{sec:prob} for the concrete form of the MVSDE), and the goal is calculating some expectation $\mathbb{E}[\phi(X_T)]$.
Although the problems to which we ultimately aim to apply quantum computing are more complex---including derivative pricing under the SLV model---we believe it is reasonable to begin with the above setting, given the complications specific to each advanced problem, which need to be addressed separately: for example, calibrating the SLV model involves the estimation of not the expectations as in this work, but the conditional expectations of the squared volatility, which are more demanding. 
In fact, even in the setting considered in this work, applying QMCI to MVSDEs is not straightforward, as seen later.
Thus, devising an end-to-end quantum algorithm for the setting and demonstrating the computational complexity reduction achieved by it is an important step toward the ultimate goal. 

Before the quantum algorithm, let us begin with reviewing classical algorithms.
A popular one is the particle method \cite{Bossy1997,bossy1996convergence,antonelli2002rate}.
Although we leave its detailed description to Sec.~\ref{sec:particle}, roughly speaking, it is a Monte Carlo-based step-by-step method: we generate many paths (particles) of $X_t$ up to a time grid point $t_i$, estimate $\gamma_k(t_i)$ as the average over the generated particles, and use it for time evolution to the next time point $t_{i+1}$.
According to \cite{antonelli2002rate}, to suppress the error to $\epsilon$, we generate $O(1/\epsilon^2)$ particles, as is common in Monte Carlo methods, with a time step size proportional to $\epsilon$.
This results in the computational complexity of $O(1/\epsilon^3)$, with time evolution over one interval regarded as a unit operation.

It looks straightforward to quantize this method as the following loop: once $\gamma_k(t)$ is estimated up to $t=t_i$, following \cite{Rebentrost2018,Stamatopoulos2020optionpricingusing,Chakrabarti2021thresholdquantum,kaneko2022quantum}, we can construct a quantum circuit for time evolution of $X_t$ to $t_{i+1}$, which generates a quantum state that encodes the distribution of $X_{t_{i+1}}$, and use it to estimate $\gamma_k(t)$ at $t=t_{i+1}$ by QMCI.
This seems to immediately provide a quadratic quantum speed-up, replacing the classical sample complexity of $O(1/\epsilon^2)$ with the quantum query complexity of $O(1/\epsilon)$.
However, things do not go so smoothly, due to an issue unique to quantum computing.
The distribution-encoded quantum state is destroyed by QMCI, and to estimate $\gamma_k(t)$ at each time point $t=t_i$, we need time evolution from the initial time.
This results in the complexity scaling as $O(n_{\rm t}^2)$ on the number $n_{\rm t}$ of time steps and as $O(1/\epsilon^2)$ on $\epsilon$ if the time step size is proportional to $\epsilon$, where the time evolution over one interval is regarded as a unit operation also in the quantum case.
This scaling on $n_{\rm t}$ is worse than $O(n_{\rm t})$ of the classical method, in which we can store the values of $X_t$ at intermediate times in memory and we do not need to restart time evolution from the beginning each time.
This issue has already been pointed out in previous papers \cite{doriguello2022,miyamoto2025}.
Consequently, the total complexity of the above quantum approach is $O(1/\epsilon^3)$, which shows no quantum speed-up.

To break through the situation, we reduce the number of time steps using a higher-order time discretization method for SDEs such as the stochastic Runge-Kutta (SRK) method \cite{Rossler01032006,Rossler2010} and a higher-order (concretely, linear) extrapolation of $\gamma_k(t)$.
This allows us to set the time step size to $O(\epsilon^{1/p})$, where $p\in(1,2]$.
Incorporating this into the above quantum approach yields the main proposal in this study, our quantum algorithm for solving MVSDEs.
We conduct a rigorous evaluation of our algorithm's accuracy and complexity, and show that it indeed outputs an expectation of $\mathbb{E}[\phi(X_T)]$ with accuracy $\epsilon$ at a cost of $O(1/\epsilon^{1+2/p})$ query complexity, which is an improved scaling compared to $O(1/\epsilon^3)$ of the classical method.
We also conduct a numerical experiment to demonstrate the performance of the proposed algorithm.
Applying our quantum algorithm to an example of MVSDEs with some parts emulated classically, we numerically confirm its accuracy and complexity scaling.

The rest of this paper is organized as follows.
As a preliminary part, Sec.~\ref{sec:prem} presents the problem considered in this paper and gives some reviews on MVSDEs and QMCI.
Sec.~\ref{sec:main} is the main part of this paper: after giving an intuitive explanation on how our algorithm works, we present its entire procedure and the main theorem on its accuracy and complexity, leaving the proof to the appendix.
Sec.~\ref{sec:demo} is devoted to the numerical experiment, and Sec.~\ref{sec:sum} concludes this paper.

\section{Preliminary \label{sec:prem}}

\subsection{Notation}

For $n\in\mathbb{N}$, we denote $[n]\coloneqq \{1,\ldots,n\}$.
$\mathbb{N}_0\coloneqq \{0\} \cup \mathbb{N}$ denotes the set of all nonnegative integers.
$\mathbb{R}_+$ denotes the set of all positive real numbers.

For a vector $x=(x_1,\ldots,x_d)\in\mathbb{C}^d$, $|x|\coloneqq \sqrt{\sum_{i=1}^d |x_i|^2}$ denotes its Euclidean norm, and for a matrix $A=(a_{ij})\in\mathbb{C}^{m \times n}$, $|A|\coloneqq \sqrt{\sum_{i,j=1}^d |a_{i,j}|^2}$ denotes its Frobenius norm.

$N(\mu,v)$ denotes the normal distribution with mean $\mu$ and variance $v$.

If $x\in\mathbb{R}$ satisfies $|x-y|\le\epsilon$ for $y\in\mathbb{R}$ and $\epsilon\in\mathbb{R}_+$, we say that $x$ is an $\epsilon$-approximation of $y$.

$\mathbb{C}^{n}_P(\mathbb{R}^d,\mathbb{R})$ denotes the space of $n$ times continuously differentiable functions from $\mathbb{R}^d$ to $\mathbb{R}$ for which all partial derivatives up to order $n$ have polynomial growth \cite{Rossler01032006}.
In this notation, we omit the domain $\mathbb{R}^d$ and codomain $\mathbb{R}$ when they are clear from context.

\subsection{Problem \label{sec:prob}}

In this paper, we address the following problem.

\begin{problem}
Let $[0,T]$ be a finite time interval and $(\Omega,\mathcal{F},P)$ be a complete probability space, where a standard $m$-dimensional Brownian motion $W_t=(W^1_t,\ldots,W^d_t)$ is defined.
Consider the $\mathbb{R}^d$-valued random process $X_t$ that has a deterministic initial value $X_0$ and obeys the Ito SDE 
    \begin{align}
    & dX_t= a(t,X_t) dt + b(t,X_t) dW_t \nonumber \\
 & a(t,X_t)=\sum_{k=1}^K 
 \gamma_k(t) \alpha_k(X_t),~b(t,X_t)=\sum_{k=1}^K 
 \gamma_k(t) \beta_k(X_t)
\label{eq:MVSDE}
\end{align}
with $K\in\mathbb{N}$, $\alpha_1,\ldots,\alpha_K:\mathbb{R}^d \rightarrow \mathbb{R}^d$, $\beta_1,\ldots,\beta_K:\mathbb{R}^d \rightarrow \mathbb{R}^{d \times m}$, and 
$\gamma_1,\ldots,\gamma_K$ defined as expectations
\begin{align}
    \gamma_k(t)\coloneqq \mathbb{E}[\varphi_k(X_t)],
    \label{eq:gamma}
\end{align}
where $\varphi_1,\ldots,\varphi_K:\mathbb{R}^d \rightarrow \mathbb{R}$.
Then, assuming that a unique solution of Eq.~\eqref{eq:MVSDE} exists, we aim to estimate the expectation $\mathbb{E}[\phi(X_T)]$ at the terminal time $T$ with a given function $\phi:\mathbb{R}^d\rightarrow\mathbb{R}$.
\label{prob:main}
\end{problem}

We consider this problem under the following assumptions.

\begin{assumption}
     For some $M\in\mathbb{N}$, $\alpha_k$ and $\beta_k$ are bounded and have bounded derivatives up to $M$-th order, namely, there exists $U>1$ such that for any $k\in[K]$, $(l_1,\ldots,l_d)\in\mathbb{N}_0^d$ satisfying $1 \le l_1+\cdots+l_d \le M$, and $x\in \mathbb{R}^d$,
\begin{align}
\left|\partial_{x_1}^{l_1}\cdots\partial_{x_d}^{l_d} \alpha_k(x)\right|\le U, ~ \left|\partial_{x_1}^{l_1}\cdots\partial_{x_d}^{l_d} \beta_k(x)\right|\le U,
\label{eq:alphabetaBound}
\end{align}
holds.
\label{ass:alphabetaBound}
\end{assumption}

\begin{assumption}
$\{\gamma_k\}_k$ are in $C^2([0,T],\mathbb{R})$, and there exists $U>1$ such that for any $k\in[K]$ and $t\in[0,T]$,
\begin{align}
    |\gamma_k(t)|, |\gamma^\prime_k(t)|, |\gamma^{\prime\prime}_k(t)|\le U
    \label{eq:gammaBound}
\end{align}
holds.
\label{ass:gammaBound}
\end{assumption}

\begin{assumption}
For some $M\in\mathbb{N}$, $\varphi_k$'s have bounded derivatives up to $M$-th order, namely, there exists $U>1$ such that for any $k\in[K]$, $(l_1,\ldots,l_d)\in\mathbb{N}_0^d$ satisfying $1 \le l_1+\cdots+l_d \le M$, and $x\in \mathbb{R}^d$,
\begin{align}
\left|\partial_{x_1}^{l_1}\cdots\partial_{x_d}^{l_d} \varphi_k(x)\right|\le U
    \label{eq:basis}
\end{align}
holds.
\label{ass:basis}
\end{assumption}

Note that Assumptions \ref{ass:alphabetaBound} and \ref{ass:basis} immediately imply that $\alpha_k$, $\beta_k$, and $\varphi_k$ are Lipschitz continuous with Lipschitz constants $dU$, $d\sqrt{m}U$, and $\sqrt{d}U$.

Assuming a common upper bound $U$ in Eqs.~\eqref{eq:alphabetaBound} to \eqref{eq:basis} is just for simplicity, and assuming different bounds does not make any essential change in the following discussion. 

Originally, \cite{Belomestny2018} considered the form of the coefficient functions $a$ and $b$ in Eq.~\eqref{eq:MVSDE} in light of orthogonal series expansion.
Let us assume that $a_{\rm MV}$ and $b_{\rm MV}$ in Eq.~\eqref{eq:MVSDEGen}, which depend on the law $\mu_t$, take the forms of
\begin{align}
    a_{\rm MV}(x,\mu_t) = \int \bar{a}(x,u) \mu_t(du),
    b_{\rm MV}(x,\mu_t) & = \int \bar{b}(x,u) \mu_t(du),
\end{align}
with $\bar{a}:\mathbb{R}^d \times \mathbb{R}^d \rightarrow \mathbb{R}^d$ and $\bar{b}:\mathbb{R}^d \times \mathbb{R}^d \rightarrow \mathbb{R}^{d \times m}$ written by
\begin{align}
    \bar{a}(x,u)=\sum_{k=1}^K \alpha_k(x)\varphi_k(u), 
    \bar{b}(x,u)=\sum_{k=1}^K \beta_k(x)\varphi_k(u).
    \label{eq:barabExp}
\end{align}
That is, seen as functions of $u$, $\bar{a}$ and $\bar{b}$ are given as the expansion by the functions $\varphi_k$ with $x$-dependent expansion coefficients $\alpha_k(x)$ and $\beta_k(x)$.
In this setting, the general form of the MVSDE in Eq.~\eqref{eq:MVSDEGen} is boiled down to Eq.~\eqref{eq:MVSDE}.
The form of Eq.~\eqref{eq:barabExp} is naturally motivated if $\varphi_1,\ldots,\varphi_K$ spans an orthonormal system: with some weight function $w:\mathbb{R}^d\rightarrow\mathbb{R}_+$, for any $k,l\in[K]$,
\begin{align}
    \int \varphi_k(x)\varphi_l(x) w(x)dx=
    \begin{cases}
        1~; & k=l \\
        0~; & {\rm otherwise}
    \end{cases}.
    \label{eq:ortho}
\end{align}
The expansion by orthogonal functions is a widely used method for function approximation, and for sufficiently smooth $\bar{a}$ and $\bar{b}$, the expansion like Eq.~\eqref{eq:barabExp} serves as a good approximation.

Although this is the background of the form of Eq.~\eqref{eq:MVSDE}, we do not need to mind this point hereafter: the following discussion holds whether the orthogonal relation \eqref{eq:ortho} is satisfied or not.
Besides, some widely considered models, such as the Shimizu-Yamada model and the Kuramoto-Shinomoto-Sakaguchi model considered later, fall into the form of Eq.~\eqref{eq:MVSDE} with no approximation.

\subsection{Time discretization of SDEs}

When we numerically solve SDEs, not only those of the MV type but also ordinary ones, time $t$ is discretized: we introduce grid points $\{t_i\}_i$ and simulate the random process at the time points.
A simple and widely used discretization method is the Euler method~\cite{Maruyama1955}: for an SDE $dY_t=a(t,Y_t) dt + b(t,Y_t) dW_t$ for the $\mathbb{R}^d$-valued random process $Y_t$, given its value $Y_{t_i}$ at time $t_i$, we approximate $Y_{t_{i+1}}$, the value at the next time point $t_{i+1}$, by
\begin{align}
    Y_{t_{i+1}} \approx Y_{t_i} + a(t_i,Y_{t_i}) h_i + b(t_i,Y_{t_i}) \Delta W_i,
\end{align}
where $h_i\coloneqq t_{i+1}-t_i$, and each component of $\Delta W_i\coloneqq W_{t_{i+1}}-W_{t_i}$ can be sampled from $N(0,h_i)$.
More generally and schematically, a discretization method can be specified by a map $F^h_{a,b,t}$ determined by the coefficient functions $a$ and $b$ in the SDE, time $t$, and the time step size $h$.
That is, we consider a discrete random process $\tilde{Y}_i$ defined as
\begin{align}
    \tilde{Y}_{i+1} = F^{h_i}_{a,b,t_i}(\tilde{Y}_i,Z_i),
    \label{eq:DiscProc}
\end{align}
and approximate $Y_{t_i}$ by $\tilde{Y}_i$.
Here, $Z_i$ is some random variable that we can sample, e.g., $\Delta W_i$ in the Euler method.

The convergence rate of the approximated process $\tilde{Y}_i$ to the true one $Y_t$ in some sense in the limit of small time step is an important property of the discretization method.
In particular, we are now interested in the {\it weak order of accuracy}, which quantifies the convergence rate of expectations evaluated with $\tilde{Y}_i$ in place of $Y_{t_i}$, since we estimate the expectations like Eq.~\eqref{eq:gamma} in solving the MVSDE.
We give the definition of the weak order of accuracy in terms of the local error, namely, the error in one time step, following Ref.~\cite{Rossler01032006}.

\begin{definition}

    Let $p\in\mathbb{R}$.
    Let an $\mathbb{R}^d$-valued random process $Y$ be the solution of an SDE $dY_t=a(t,Y_t) dt + b(t,Y_t) dW_t$, $a:[t_{\rm s},t_{\rm e}]\times\mathbb{R}^d \rightarrow \mathbb{R}^d$, $b:[t_{\rm s},t_{\rm e}]\times\mathbb{R}^d \rightarrow \mathbb{R}^{d \times m}$, defined on an interval $[t_{\rm s},t_{\rm e}]\subset[0,T]$.
    Suppose that for any $h\in[0,t_{\rm e}-t_{\rm s}]$, there exist a map $F^h_{a,b,t_{\rm s}}:\mathbb{R}^d \times \mathcal{S}_h \rightarrow \mathbb{R}^d$ with some set $\mathcal{S}_h$ and a $\mathcal{S}_h$-valued random variable $Z$ that satisfy the following property:
    \begin{itemize}
        \item if we define
        \begin{align}
            \tilde{Y}_h^x\coloneqq F^h_{a,b,t_{\rm s}}(x,Z)
            \label{eq:DiscMethodDef}
        \end{align}
        for $x\in\mathbb{R}$, 
        \begin{align}
            |\mathbb{E}[f(\tilde{Y}_h^x)]-\mathbb{E}[f(Y^{t_{\rm s},x}_{t_{\rm s}+h})]|\le \kappa(1+|x|^{2r}) h^{p+1}
            \label{eq:WeakOrderDef}
        \end{align}
        holds for any $f\in\mathbb{C}^{2(\lceil p \rceil+1)}_P(\mathbb{R}^d,\mathbb{R})$ with some $\kappa\in\mathbb{R}$ and $r\in\mathbb{N}$ that depend not on $h$ but on $f$.
    \end{itemize}
    Then, we say that the method \eqref{eq:DiscMethodDef} is the discretization method for the SDE with weak order $p$.
    \label{def:weakorder}
\end{definition}

The local error bound in Eq.~\eqref{eq:WeakOrderDef} leads to the bound of the total error in using the discretization method as Eq.~\eqref{eq:DiscProc} over the finite time interval, $|\mathbb{E}[f(\tilde{Y}_i)]-\mathbb{E}[f(Y_{t_i})]|=O(h^{p})$.
The definition of the weak order of accuracy with the error bound in this form can be found in the literature, e.g., Refs.~\cite{milstein2013numerical,Rossler01032006}.
To suppress the error in the expectation to $\epsilon$, we can take $h=\Omega(\epsilon^{1/p})$~\footnote{This $\Omega(\cdot)$ is a kind of Landau symbols that denotes the asymptotic lower bound and should not be confused with the sample space $\Omega$.}, which leads to the number of time steps $n_{\rm t}=O(1/\epsilon^{1/p})$.

It is known that the Euler method has weak order 1 in a certain setting~\cite{milstein2013numerical}.
Methods with higher weak order are also known.
Among them are kinds of SRK methods \cite{Rossler01032006,Rossler2010}.
For example, RI1RM proposed in Ref.~\cite{Rossler01032006} can achieve weak order 2 and is applicable to a wide range of SDEs, including those with non-commutative noises.
Following the procedure of the method, we can explicitly write down the formula $F^h_{a,b,t}$.
The random variables $Z$ in this method are discrete ones, which are chosen so that Eq.~\eqref{eq:WeakOrderDef} holds. 
The concrete forms of $F^h_{a,b,t}$ and $Z$ and the condition for achieving the weak order 2 in this method are presented in Appendix~\ref{app:SRK}, along with the definition of non-commutativity.
For more details of SRK methods for the weak approximation, see Ref.~\cite{Rossler01032006}.

\subsection{Particle method for MVSDEs \label{sec:particle}}

We can also apply time discretization methods to an MVSDE.
In this case, we need to estimate the expectations that appear in the SDE.
One way for this is a Monte Carlo-based method called the particle method~\cite{Bossy1997,bossy1996convergence,antonelli2002rate}, in which the expectations are approximated by the sample averages over many generated paths (particles).
The concrete procedure in the case that the SDE takes the form in Eq.~\eqref{eq:MVSDE} is given in Ref.~\cite{Belomestny2018} and shown in Algorithm~\ref{alg:particle}.

\begin{algorithm}[H]
\caption{Particle method for Problem~\ref{prob:main}}\label{alg:particle}
\begin{algorithmic}[1]

\Require
\begin{itemize}
    \item Number of particles $N$
    \item time step size $h$ such that $n_{\rm t} \coloneqq T/h\in\mathbb{N}$
\end{itemize}

\State For $n=1,\ldots,N$, set $\tilde{X}^n_0 \coloneqq X_0$.

\State For $i=0,...,n_{\rm t}$, set $t_i=ih$.

\For{$i=0,...,n_{\rm t}-1$}

\State For $k=1,\ldots,K$, set
\begin{align}
    \hat{\gamma}_{k,i}\coloneqq \frac{1}{N}\sum_{n=1}^N \varphi_k(\tilde{X}^n_i).
    \label{eq:gammaSampleAve}
\end{align} 

\State For $n=1,\ldots,N$, compute
\begin{align}
    &\tilde{X}^n_{i+1} \coloneqq \tilde{X}^n_i + a_i(\tilde{X}^n_i)h + b_i(\tilde{X}^n_i)Z_{n,i}, \nonumber \\
    &a_i(x)\coloneqq\sum_{k=1}^K \hat{\gamma}_{k,i}\alpha_k(x), b_i(x)\coloneqq\sum_{k=1}^K \hat{\gamma}_{k,i}\beta_k(x)
    \label{eq:MVEuler}
\end{align}
where $Z_{n,i}=(Z^1_{n,i},\ldots,Z^m_{n,i})\in\mathbb{R}^m$ with $Z^j_{n,i}$ independently sampled from $N(0,h)$.

\EndFor

\State Output $\frac{1}{N}\sum_{n=1}^N \phi(\tilde{X}^n_{n_{\rm t}})$ as an estimate of $\mathbb{E}[\phi(X_T)]$.

\end{algorithmic}
\end{algorithm}

Ref.~\cite{antonelli2002rate} showed that under some conditions on the SDE, the empirical distribution function of the particles converges to the true one of $X_{t_i}$ in the limit of large $N$ and small $h$ with error of $O\left(\frac{1}{\sqrt{N}}+h\right)$.
Thus, to suppress the error to $\epsilon$, we can take
\begin{align}
N=O\left(\frac{1}{\epsilon^2}\right)    
\label{eq:NPart}
\end{align}
and
\begin{align}
    h=\Omega\left(\epsilon\right),
    \label{eq:hPart}
\end{align}
which implies that
\begin{align}
    n_t=O\left(\frac{1}{\epsilon}\right).
    \label{eq:ntPart}
\end{align}
Therefore, the computational complexity of Algorithm~\ref{alg:particle} in terms of the number of arithmetic operations and sampling $Z_{n,i}$ scales as
\begin{align}
O\left(\frac{1}{\epsilon^3}\right)
\label{eq:CompPart}
\end{align}
with respect to $\epsilon$.

\subsection{Extrapolation of a function \label{sec:extrap}}

In the method proposed later, we estimate $\gamma_k(t)$ for $t\in(t_i,t_{i+1}]$ using its values at the previous time points, $\gamma_k(t_i),\gamma_k(t_{i-1}),\ldots$.
For this, we use the Taylor expansion-based extrapolation.
Namely, assuming that $\gamma_k$ is sufficiently smooth, we approximate $\gamma_k(t)$ for $t\in[t_i, t_{i+1}]$ as
\begin{align}
    \gamma_k(t) \approx \tilde{\gamma}_{k,i}(t) \coloneqq \sum_{l=0}^{r} \frac{1}{l!} \hat{\gamma}^{(l)}_{k,i} (t-t_i)^l.
    \label{eq:Taylorgamma}
\end{align}
Here, $\hat{\gamma}^{(l)}_{k,i}$, an estimate of the $l$-th derivative $\gamma^{(l)}_k(t_i)$ of $\gamma_k$ at $t=t_i$, is calculated with $\gamma_k(t_i),\ldots,\gamma_k(t_{i-s})$, the values of $\gamma_k$ at the latest $s+1$ time grids, where $s \ge l$, by a backward finite difference formula
\begin{align}
    \hat{\gamma}^{(l)}_{k,i}=\frac{1}{h^l} \sum_{j=0}^{s} d^{l}_{s,j} \gamma_k(t_{i-j}),
    \label{eq:finDiff}
\end{align}
where $d^{l}_{s,0},\ldots,d^{l}_{s,s}$ are real constants, and time step sizes $h_i,\ldots,h_{i-l}$ are assumed to be equal to $h$.
Finite difference approximation formulas of derivatives have been proposed, see e.g., \cite{LI200529}.
For example, the two-point backward difference approximation of the first derivative is given by
\begin{align}
    \hat{\gamma}^\prime_{k,i} = \frac{\gamma_k(t_i)-\gamma_k(t_{i-1})}{h_{i-1}}.
\end{align}

The error in this extrapolation method is roughly evaluated as follows, although the rigorous evaluation is presented in the proof of Theorem~\ref{th:main} in Appendix~\ref{app:proof}.
Taylor's theorem tells us that the approximation \eqref{eq:Taylorgamma} has an error of $O(h^{r+1})$, if each $\hat{\gamma}^{(l)}_{k,i}$ is exact.
Now, $\hat{\gamma}^{(l)}_{k,i}$ estimated by the finite difference formula \eqref{eq:finDiff} generally has an error of $O(h^{s-l+1})$~\cite{LI200529}.
Combining these, we see that for $t\in(t_i,t_{i+1}]$, the extrapolation in the form of Eq.~\eqref{eq:Taylorgamma} has an error of $O(h^{r+1}+h^{s+1})$.
For example, for $r=s=1$, which corresponds to the first-order Taylor expansion and the two-point backward difference formula, the error is $O(h^2)$.

When we use the above extrapolation in the current problem, there are some complications, e.g., the absence of $\gamma_k(t_{i-1})$ for the initial time point $t_i=t_0$.
The algorithm proposed in this work addresses such a point, as presented in Sec.~\ref{sec:algo}.

\subsection{Quantum building-blocks}

\subsubsection{Representing numbers on qubits}

In this paper, we consider computation on systems of multiple quantum registers, each of which consists of multiple qubits. 
To represent real numbers, we use the fixed-point binary representation.
Then, for $x\in\mathbb{R}$, we denote by $\ket{x}$ the computational basis state on a quantum register that holds the bit string equal to the binary representation of $x$.
If $x$ is a vector, $x=(x_1,\cdots,x_d)\in\mathbb{R}^d$, $\ket{x}$ is a quantum state on a $d$-register system like $\ket{x}:=\ket{x_1}\cdots\ket{x_d}$.
We assume that every register has a sufficiently large number of qubits and thus neglect errors caused by finite-precision representation.

\subsubsection{Quantum arithmetic circuits}

We can perform arithmetic operations on numbers represented on quantum registers.
For example, we can implement quantum circuits for four basic arithmetic operations such as addition $O_{\mathrm{add}}:\ket{x}\ket{y}\ket{0}\mapsto\ket{x}\ket{y}\ket{x+y}$ and multiplication $O_{\mathrm{mul}}:\ket{x}\ket{y}\ket{0}\mapsto\ket{x}\ket{y}\ket{xy}$, where $x$ and $y$ are integers: see Ref.~\cite{MunosCoreas2022} and the references therein for concrete implementations.
In the finite-precision binary representation, these operations are immediately extended to those for real numbers.
Using circuits like the above, we can construct circuits $A_f$ to approximately compute elementary functions $f$ such as $\exp$ and $\sin$/$\cos$ and functions written as some combination of elementary functions:
\begin{align}
    A_f\ket{x}\ket{0}=\ket{x}\ket{f(x)}
    \label{eq:Uf}
\end{align}
for $x\in\mathbb{R}$ \cite{haner2018optimizing}.

\subsubsection{Loading probability distributions into quantum states}

In this paper, we use quantum circuits to prepare states in which probability distributions are encoded.
That is, for a probability density function $q:\mathbb{R}^d \rightarrow \mathbb{R}_+$ of some random variable, we consider the quantum circuit $U_q$ that acts as
\begin{align}
U_q \ket{0} = \sum_{x \in \mathcal{G}} \sqrt{\bar{q}(x)} \ket{x}.
\label{eq:SPOra}
\end{align}
This may accompany an ancillary register:
\begin{align}
U_q \ket{0}\ket{0} = \sum_{x \in \mathcal{G}} \sqrt{\bar{q}(x)} \ket{x}\ket{\phi_x},
\label{eq:SPOra2}
\end{align}
where $\ket{\phi_x}$ is a quantum state that depends on $x$.
Note that because of the fixed-point binary representation, we need to discretize the density $q$, which is originally continuous, to $\bar{q}:\mathcal{G} \rightarrow [0,1]$ such that $\sum_{x\in\mathcal{G}}\bar{q}(x)=1$, where the finite set $\mathcal{G} \subset \mathbb{R}^d$ is the sufficiently fine grid on $\mathbb{R}^d$.
In this paper, we call $\bar{q}$ the discretization of $q$, and it is given by, for example, integrating $q$ over each cell of the grid $\mathcal{G}$.
To substitute $\bar{q}$ for $q$ in calculating the expectation $E_{q,f} \coloneqq \int f(x)q(x)dx$ of $f:\mathbb{R}^d \rightarrow \mathbb{R}$, $\tilde{E}_{q,f} \coloneqq \sum_{x\in\mathcal{G}} f(x)\bar{q}(x)$ must approximate $E_{q,f}$ well.
Hereafter, assuming that the grid $\mathcal{G}$ is sufficiently fine, we neglect this discretization error and regard $\tilde{E}_{q,f}$ as $E_{q,f}$ for the functions $f$ considered in this paper.

The quantum circuits like Eq.~\eqref{eq:SPOra} have been considered in previous studies, especially for simple but widely used distributions such as the normal distribution.
The Grover-Rudolph method \cite{grover2002creating} is a pioneering one, and other improved methods~\cite{Sanders2019,zoufal2019quantum,Holmes2020,kaneko2022quantum,rattew2022preparing,mcardle2022quantum,MarinSanchez2023,Moosa_2024} have been proposed.
In some methods, the gate cost to construct $U_q$ scales logarithmically with respect to the number of points in $\mathcal{G}$.
This enables us to take an exponentially fine grid as $\mathcal{G}$, which makes regarding $\tilde{E}_{q,f}$ as $E_{q,f}$ reasonable.

Note that if we have the circuit $U_{q_\Theta}$ to load the density $q_\Theta$ of some random variable $\Theta$ as Eq.~\eqref{eq:SPOra}, we can also construct the density loading circuit like Eq.~\eqref{eq:SPOra2} for another random variable $\Theta^\prime$ that is given by $\Theta^\prime=G(\Theta)$ with some function $G$ computable by the arithmetic circuit $A_G$.
That is, by combining $U_{q_\Theta}$ and $A_G$, we can generate $\sum_{\theta \in \mathcal{G}} \sqrt{\bar{q}_\Theta(\theta)} \ket{\theta}\ket{G(\theta)}$, which can be regarded as $\sum_{\theta^\prime \in G(\mathcal{G})} \sqrt{\bar{q}_{\Theta^\prime}(\theta^\prime)} \ket{\theta}\ket{\theta^\prime}$ for sufficiently smooth $G$.
In particular, for a time-discretization method of SDEs with a map $F^h_{a,b,t}$ and a random variable $Z$, if we have the loading circuit $U_{q_Z}$ for $Z$ that generates $\sum_{z \in \mathcal{G}} \sqrt{\bar{q}_Z(z)} \ket{z}$ and the arithmetic circuit $A_{F^h_{a,b,t}}$ for $F^h_{a,b,t}$, we can construct the circuit $U_{F^h_{a,b,t}}$ that acts for any $y\in\mathbb{R}$ as
\begin{align}
    \ket{y}\ket{0}\ket{0}
    & \xrightarrow{U_{q_Z}} \sum_{z \in \mathcal{G}} \ket{y} \sqrt{\bar{q}_Z(z)} \ket{z}\ket{0} \nonumber \\
    & \xrightarrow{A_{F^h_{a,b,t}}} \sum_{z \in \mathcal{G}} \sqrt{\bar{q}_Z(z)} \ket{y} \ket{z} \ket{F^h_{a,b,t}(y,z)}=\ket{y}\sum_{y^\prime\in\mathcal{G}^\prime}\sqrt{\bar{q}_{F^{h}_{a,b,t}}(y^\prime|y)} \ket{z} \ket{y^\prime}.
\end{align}
Here, $\bar{q}_{F^{h}_{a,b,t}}(\cdot|y)$ is a discretization of $q_{F^{h}_{a,b,t}}(\cdot|y)$, the density of $F^{h}_{a,b,t}(y,Z)$, with some grid $\mathcal{G}^\prime$.
That is, with the second register hidden as an ancillary one, $U_{F^h_{a,b,t}}$ acts as
\begin{align}
    U_{F^h_{a,b,t}}\ket{y}\ket{0}=\ket{y}\sum_{y^\prime\in\mathcal{G}^\prime}\sqrt{\bar{q}_{F^{h}_{a,b,t}}(y^\prime|y)} \ket{y^\prime}.
    \label{eq:OraOneStep}
\end{align}
In fact, previous papers \cite{Rebentrost2018,Stamatopoulos2020optionpricingusing,Chakrabarti2021thresholdquantum,kaneko2022quantum} considered such a quantum circuit for the Euler method, which is implemented with a loading circuit for the normal distribution and arithmetic circuits.

In the above discussion, it is assumed that the random variable $Z$ takes continuous values, which necessitates discretization.
If $Z$ is originally discrete as in some kinds of the SRK method, the above discussion proceeds almost similarly, with a slight change in interpretation.
Now, the loading circuit $U_{q_Z}$ for $Z$ generates $\sum_{z \in \mathcal{G}} \sqrt{q_Z(z)} \ket{z}$ with $\mathcal{G}$ and $q_Z$ being the set of the values $Z$ can take and $Z$'s probability mass function, respectively, and $U_{F^h_{a,b,t}}$ acts as $U_{F^h_{a,b,t}}\ket{y}\ket{0}=\ket{y}\sum_{y^\prime\in\mathcal{G}^\prime}\sqrt{q_{F^{h}_{a,b,t}}(y^\prime|y)}\ket{y^\prime}$, where $\mathcal{G}^\prime$ is the set of the values $F^h_{a,b,t}(y,Z)$ can take, and $q_{F^{h}_{a,b,t}}(\cdot|y)$ is its probability mass function.

\subsubsection{Quantum Monte Carlo integration}

For calculating expectations and, more generally, integrals, the quantum algorithm for Monte Carlo integration, which we hereafter call quantum Monte Carlo integration or QMCI, has been proposed~\cite{montanaro2015}, based on quantum amplitude estimation~\cite{brassard2002,suzuki2020amplitude}.
Given the circuits $U_q$ to encode the density $q$ as Eq.~\eqref{eq:SPOra} and $U_f$ to compute the integrand $f$, QMCI outputs an estimate of the expectation $\int f(x)q(x)dx$.
Ref.~\cite{montanaro2015} presents some versions of QMCI, and in this paper, we use the version for integrands with bounded $L_2$ norm.
Leaving the detailed procedure to \cite{montanaro2015}, we now present only the theorem on its query complexity.

\begin{theorem}[Lemma 2.4 in \cite{montanaro2015}, modified]
    Let $\epsilon\in\mathbb{R}_+$ and $\eta\in(0,1)$.
    Suppose that $X$ is a $\mathbb{R}^d$-valued random variable with density $q$.
    Suppose that for a function $f:\mathbb{R}^d\rightarrow \mathbb{R}$, $\|f(X)\|_{2}^2=\int f^2(x) q(x) dx\le U <\infty$ exists with a known upper bound $U>1$.
    Suppose that for $q$ and $f$, we have the quantum circuits $U_q$ and $U_f$ as Eqs.~\eqref{eq:SPOra} and \eqref{eq:Uf}, respectively.
    Then, there is a quantum algorithm $\proc{QMCI}(U_q,U_f,\epsilon,\eta,U)$ that outputs an $\epsilon$-approximation of $\int f(x) q(x) dx$ with probability at least $1-\eta$, querying $U_q$ and $U_f$
    \begin{align}
        O\left(\frac{U}{\epsilon}\log^{3/2}\left(\frac{U}{\epsilon}\right)\log\log\left(\frac{U}{\epsilon}\right)\log\left(\frac{1}{\eta}\right)\right)
        \label{eq:CompQMCI}
    \end{align}
    times.
   \label{th:QMCI}
\end{theorem}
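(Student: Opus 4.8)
The plan is to derive the statement from quantum amplitude estimation \cite{brassard2002,suzuki2020amplitude}, following the construction of Ref.~\cite{montanaro2015} and adapting its error and complexity bookkeeping to the present normalization. The starting point is the primitive that, given a circuit preparing a state of the form $\sqrt{p}\ket{\psi_1}\ket{1}+\sqrt{1-p}\ket{\psi_0}\ket{0}$, estimates the amplitude $p$ to additive accuracy $\delta$ using $O(1/\delta)$ applications of the circuit and its inverse; repeating this $O(\log(1/\eta))$ times and taking the median boosts the success probability to at least $1-\eta$.

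First I would treat the case of a bounded integrand $g:\mathbb{R}^d\to[0,1]$. Using $U_q$ to load $q$ as in Eq.~\eqref{eq:SPOra}, $U_f$ to compute $g$ as in Eq.~\eqref{eq:Uf}, and an arithmetic circuit that applies to an ancilla qubit a controlled rotation with amplitude $\sqrt{g(x)}$, I would prepare
\begin{align}
\sum_{x\in\mathcal{G}}\sqrt{\bar q(x)}\ket{x}\left(\sqrt{g(x)}\ket{1}+\sqrt{1-g(x)}\ket{0}\right),
\end{align}
whose ancilla is found in $\ket{1}$ with probability $\sum_{x}\bar q(x) g(x)\approx\int g(x)q(x)dx$. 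Amplitude estimation on this ancilla then yields an $\epsilon$-approximation of $\int g\,q\,dx$ with $O((1/\epsilon)\log(1/\eta))$ queries, establishing the theorem in the $[0,1]$-bounded case.

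Next I would reduce the general $L_2$-bounded case to the bounded case. Rescaling $f$ by $\sqrt{U}$ gives $\|f/\sqrt{U}\|_2\le 1$; splitting $f=f^+-f^-$ into positive and negative parts and estimating each separately, it suffices to treat a nonnegative $f$ with $\mathbb{E}[f^2]\le U$. I would decompose the range of $f$ into geometric scales, writing $f$ as a telescoping sum of truncations at thresholds of the form $2^j$, and cut the whole sum off at a threshold $L$. The second-moment bound controls the discarded tail: by Markov's inequality, $\mathbb{E}[f\,\mathbf{1}_{f>L}]\le \mathbb{E}[f^2]/L$, so choosing $L$ of order $\sqrt{U}/\epsilon$ (after the $\sqrt{U}$ rescaling) makes the truncation bias at most $\epsilon$. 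Applying the bounded-range subroutine to each of the $O(\log(U/\epsilon))$ scales with an error budget allocated across scales, and summing the resulting estimates, reconstructs an $\epsilon$-approximation of $\mathbb{E}[f(X)]$.

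The query complexity then follows by summing the per-scale costs: the number of scales is $O(\log(U/\epsilon))$, each amplitude-estimation call costs $\tilde{O}(1/\epsilon)$ queries, and boosting all calls to succeed simultaneously with probability at least $1-\eta$ contributes the factor $\log(1/\eta)$. A tight analysis yields a prefactor $\sqrt{U}/\epsilon$, which is bounded by $U/\epsilon$ since $U>1$; combined with the error allocation across the logarithmically many scales, this reproduces the $\log^{3/2}(U/\epsilon)$ and $\log\log(U/\epsilon)$ factors appearing in Eq.~\eqref{eq:CompQMCI}. I expect the main obstacle to be precisely this last accounting: controlling the truncation tail using only the second-moment (rather than a pointwise) bound, and allocating the error budget tightly enough across the scales to land exactly the stated polylogarithmic exponents, while guaranteeing that all of the amplitude-estimation calls succeed together with the required probability.
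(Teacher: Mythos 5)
Your proposal is correct and follows essentially the same route as the paper: the paper does not reprove this statement but invokes Lemma 2.4 of \cite{montanaro2015} as a black box and boosts its constant success probability ($4/5$) to $1-\eta$ via the median/powering trick (Lemma 2.1 in \cite{montanaro2015}, originating from \cite{jerrum1986}), which is exactly the $\log(1/\eta)$ factor you describe. Your reconstruction of the cited lemma---amplitude estimation for $[0,1]$-bounded integrands, positive/negative splitting, dyadic truncation with the second-moment tail bound $\mathbb{E}[f\,\mathbf{1}_{f>L}]\le \mathbb{E}[f^2]/L$, and per-scale error allocation---is precisely Montanaro's Algorithm 3 argument, so there is no substantive divergence from what the paper relies on.
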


Note that the complexity in Eq.~\eqref{eq:CompQMCI} scales on the accuracy $\epsilon$ as $\widetilde{O}(1/\epsilon)$, which shows the quadratic speed-up compared to classical Monte Carlo integration with sample complexity of $O(1/\epsilon^2)$.

Although Lemma 2.4 in \cite{montanaro2015} states that the success probability of the algorithm is lower bounded by a constant $\frac{4}{5}$, the success probability in the above theorem is enhanced to an arbitrary value $1-\eta$, at the price of an additional factor $\log(1/\eta)$ in the complexity bound in Eq.~\eqref{eq:CompQMCI}.
This is done by running the original algorithm $O(\log(1/\eta))$ times and taking the median of the outputs in the different runs; see Lemma 2.1 in \cite{montanaro2015}, which originates from Lemma 6.1 in \cite{jerrum1986}.

\section{Proposed quantum algorithm \label{sec:main}}

Now, let us present our quantum algorithm for Problem~\ref{prob:main}.

\subsection{Idea \label{sec:idea}}

Before presenting the details, let us give an intuitive explanation on the way to design the quantum algorithm.

In the particle method, the scaling of the particle number $N$ on the accuracy $\epsilon$ as $O(1/\epsilon^2)$ in Eq.~\eqref{eq:NPart} is in line with that of the sample number in general Monte Carlo integration.
The immediate idea for the quantum speed-up is applying QMCI to this problem in some way to reduce the scaling to $O(1/\epsilon)$.
In fact, this seems possible, since Algorithm~\ref{alg:particle} is a sequence of estimating the expectations $\gamma_k(t_i)$ as $\hat{\gamma}_{k,i}$ and simulating the SDE by the Euler method as Eq.~\eqref{eq:MVEuler}.
That is, we easily conceive a quantum version of Algorithm~\ref{alg:particle}, which is, roughly speaking, the iteration of the following steps for $i=0,\ldots,n_{\rm t}-1$: 
\begin{enumerate}
    \item Using the estimates of $\{\gamma_{k,i^\prime}\}_{k\in[K],i^\prime \le i}$ obtained in the previous steps, construct the quantum circuit $U_{\tilde{X}_{i+1}}$ to generate the quantum state that encodes the probability distribution of $\tilde{X}_{i+1}$ defined by Eq.~\eqref{eq:MVEuler} (with $n$ omitted).

    \item Using QMCI with $U_{\tilde{X}_{i+1}}$, estimate $\gamma_{k,i+1}$.
\end{enumerate}
Constructing $U_{\tilde{X}_{i+1}}$ is also possible, if we have $U_{F^h_{a,b,t}}$ acting as Eq.~\eqref{eq:OraOneStep}.
Combining the circuits of this type, each of which corresponds to each step of time evolution by the discretization method like Eq.~\eqref{eq:DiscProc}, we can construct $U_{\tilde{X}_{i+1}}$ (more detailed implementation will be given later, in Sec.~\ref{sec:algo}).

However, this approach has a drawback.
Let us roughly evaluate the total number of queries to $U_{F^h_{a,b,t}}$ as a metric of the complexity of the above approach, since it is iteratively used as a basic component circuit.
To get $\tilde{X}_i$, we iterate the one-step discretized time evolution like Eq.~\eqref{eq:DiscProc} $i$ times, and thus to construct $U_{\tilde{X}_i}$, we make $i$ uses of $U_{F^h_{a,b,t}}$.
In estimating $\gamma_{k,i}$ by QMCI with accuracy $\epsilon$, we make $O(1/\epsilon)$ queries to $U_{\tilde{X}_i}$, in which $U_{F^h_{a,b,t}}$ is queried $O(i/\epsilon)$ times.
Therefore, in estimating $\gamma_{k,i}$ for all $i\in[n_{\rm t}]$, the total number of queries to $U_{F^h_{a,b,t}}$ piles up to $O(n_{\rm t}^2/\epsilon)$, the quadratic scaling on $n_{\rm t}$, the number of time steps.
This is worse than the classical particle method, whose complexity obviously scales as $O(n_{\rm t})$.
This worse scaling of the quantum approach on $n_{\rm t}$ has already been observed in the context of applying QMCI to problems involving time evolution of random processes \cite{doriguello2022,Miyamoto2025Dividing}.
The root of this is that we need to combine $U_{F^h_{a,b,t}}$ for all the time steps in $U_{\tilde{X}_i}$, which corresponds to time evolution from the beginning, namely, the initial value $X_0$.
This is different from the classical Monte Carlo method, in which we store the intermediate values of the random process in memory and use the values at $t_i$ to get those at $t_{i+1}$. 

From this observation, we see that setting $n_{\rm t}=O(1/\epsilon)$ makes the total complexity $O(1/\epsilon^3)$, which is the same as the classical particle method and implies that the above quantum approach does not provide any advantage.

One may think that a solution to this issue is replacing the Euler method-based time evolution as Eq.~\eqref{eq:MVEuler} in the particle method with some higher-order method such as the SRK method that leads to fewer time steps $n_{\rm t}=O(1/\epsilon^{1/p})$.
However, this is not straightforward.
For time evolution from $t_i$ to $t_{i+1}$, the SRK method uses the values of the coefficient functions $a(t,x)$ and $b(t,x)$ in the SDE at $t\in[t_i,t_{i+1}]$.
In the case of MVSDE, we cannot obtain the exact values of $a$ and $b$ and do not have even estimations for $t\in[t_i,t_{i+1}]$ right after time evolution up to $t_i$, with only the estimates $\hat{\gamma}_{k,i^\prime}$ of $\gamma_{k}(t_{i^\prime})$ for $i^\prime \le i$ obtained.

We thus need to extrapolate $\gamma_{k}(t)$ for $t\in[t_i,t_{i+1}]$ using $\{\hat{\gamma}_{k,i^\prime}\}_{i^\prime \le i}$.
A naive way is extrapolating with a constant function, $\gamma_k(t)\approx\hat{\gamma}_{k,i}$, which is indeed what we do in the particle method.
However, this induces the error of $O(h_i)$ in $\gamma_k$ even for smooth $\gamma_k$, since $\gamma_k(t)$ deviates from $\gamma_k(t_i)$ as $\gamma_k(t)-\gamma_k(t_i) \approx \gamma_k^\prime(t_i)(t-t_i)$.
To suppress this error to $\epsilon$, we need to set $h_i$ at most $\Omega(\epsilon)$, which results in $n_{\rm t}=O(1/\epsilon)$, the same order as the case of using the Euler method.
Therefore, we extrapolate $\gamma_k$ by a function of higher degree than a constant one, as described in Sec.~\ref{sec:extrap}.
Concretely, we use a linear function, Eq.~\eqref{eq:Taylorgamma} with $r=1$, which yields the error of $O(h_i^2)$.
Consequently, using a weak-order $p$ discretization method with this extrapolation, we can take $h_i=O(\min\{\epsilon^{1/2},\epsilon^{1/p}\})$ to achieve the accuracy $\epsilon$.
Then, $n_{\rm t}$ can be $O(\max\{1/\epsilon^{1/2},1/\epsilon^{1/p}\})$, and this results in the total complexity of $O(\max\{1/\epsilon^{2},1/\epsilon^{2/p+1}\})$, which shows the advantage over the classical particle method with complexity of $O(1/\epsilon^3)$.
Hereafter, we assume that $p\le 2$, for which we set $h_i=O(\epsilon^{1/p})$.

\subsection{Algorithm \label{sec:algo}}

Now, based on the above idea, we construct our quantum algorithm.
We begin by presenting assumptions on the availability of the higher weak order discretization method and some quantum circuits.

\begin{assumption}
    We have a discretization method with weak order $p\in\left(1,2\right]$ for any SDE that is defined on any interval $[t,u]\subset[0,T]$ and takes the form of Eq.~\eqref{eq:MVSDE} with coefficient functions $a$ and $b$ satisfying Assumptions~\ref{ass:alphabetaBound} and \ref{ass:gammaBound}.
    Besides, for each of such SDEs and any $h\in(0,|u-t|)$, we have an oracle $U_{F^h_{a,b,t}}$ that acts as Eq.~\eqref{eq:OraOneStep} for any $y\in\mathbb{R}$.
    \label{ass:discMethod}
\end{assumption}

\begin{assumption}
    We have access to the oracles $U_{\varphi_k},k\in[K]$ and $U_\phi$ that act as
    \begin{align}
        U_{\varphi_k}\ket{x}\ket{0}=\ket{x}\ket{\varphi_k(x)}, U_{\phi}\ket{x}\ket{0}=\ket{x}\ket{\phi(x)},
    \end{align}
    for any $x\in\mathbb{R}$.
    \label{ass:funcOracle}
\end{assumption}

Then, our algorithm is as shown in Algorithm\ref{alg:main}.

The following is the main theorem in this paper on the accuracy and complexity of Algorithm~\ref{alg:main}.

\begin{theorem}
Let $p\in(1,2]$, $\epsilon\in\mathbb{R}_+$, and $\eta\in(0,1)$.
Suppose that Assumptions \ref{ass:alphabetaBound} to \ref{ass:funcOracle} holds, where $M$ in Assumption \ref{ass:alphabetaBound} satisfies $M\ge 2(\lceil p \rceil + 1)$.
Assume that
\begin{align}
    2\sqrt{(T+m)}dKU^2\exp\left(2(T+m)Td^2K^2U^4\right) \le \frac{1}{12}
    \label{eq:ShortTimeAssum}
\end{align}
holds.
Suppose that, for each $i$, $\tilde{X}_i$ defined as Eq.\eqref{eq:Xtil} has the moments $\mathbb{E}[|\tilde{X}_i|^r]$ uniformly bounded by $U$ with respect to $r\in\mathbb{N}$.
Then, Algorithm~\ref{alg:main} with
\begin{align}
    \epsilon_{\mathrm{QMCI}}=\frac{\epsilon}{12}
    \label{eq:epsQMCI}
\end{align}
\begin{align}
    h_{\rm I}&=\frac{h_{\rm II}}{\left\lceil h_{\rm II}/h_{\rm I,max} \right\rceil},~h_{\rm I,max}=\frac{3\epsilon}{4U}, \nonumber \\
    h_{\rm II}&=\frac{T}{\max\left\{\left\lceil\frac{T}{h_{\rm II,max}}\right\rceil,2\right\}},~h_{\rm II,max}=\left(\frac{\epsilon}{\max\{4U,24\kappa^\prime T\}}\right)^{1/p},
    \label{eq:hIandII}
\end{align}
where $\kappa^\prime$ is a constant defined later (see Appendix~\ref{app:proof}), outputs an $\epsilon$-approximation of $\mathbb{E}[\phi(X_T)]$ with probability at least $1-\eta$, making 
\begin{align}
    O\left(\left(\frac{1}{\epsilon}\right)^{1+\frac{2}{p}}\log^{\frac{3}{2}}\left(\frac{1}{\epsilon}\right)\log\log\left(\frac{1}{\epsilon}\right)\log\left(\frac{1}{\epsilon\eta}\right)\right)
    \label{eq:query1}
\end{align}
queryies to $U_{F^h_{a,b,t}}$ and
\begin{align}
    O\left(\left(\frac{1}{\epsilon}\right)^{1+\frac{1}{p}}\log^{\frac{3}{2}}\left(\frac{1}{\epsilon}\right)\log\log\left(\frac{1}{\epsilon}\right)\log\left(\frac{1}{\epsilon\eta}\right)\right)
    \label{eq:query2}
\end{align}
queries to $U_{\varphi_k}$ and $U_\phi$, where the constant factors independent of $\epsilon$ and $\eta$ are hidden.

    \label{th:main}
\end{theorem}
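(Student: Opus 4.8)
The plan is to establish the accuracy and the query-complexity claims separately, since the latter reduces to a counting argument once the parameter choices are fixed. For the accuracy, I would first dispose of the probabilistic part. By Theorem~\ref{th:QMCI}, each of the $K(n_{\rm t}-1)+1$ invocations of \proc{QMCI} returns an $\epsilon_{\mathrm{QMCI}}$-approximation of the corresponding $\mathbb{E}[\varphi_k(\tilde X_{i+1})]$ (or, at the last step, $\mathbb{E}[\phi(\tilde X_{n_{\rm t}})]$) with failure probability at most $\eta'$. With the choice $\eta'=\eta/(K(n_{\rm t}-1)+1)$ in~\eqref{eq:etaPr}, a union bound over all calls guarantees that every estimate is simultaneously accurate with probability at least $1-\eta$. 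Conditioning on this event, everything that remains is deterministic, and the only error sources are the QMCI bias ($\le\epsilon_{\mathrm{QMCI}}$ per estimate), the time-discretization error, and the extrapolation of the $\gamma_k$.

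The core of the argument is an inductive bound on the estimation error $e_{k,i}\coloneqq|\hat\gamma_{k,i}-\gamma_k(t_i)|$. Since $\hat\gamma_{k,i}$ equals $\mathbb{E}[\varphi_k(\tilde X_i)]$ up to the QMCI bias, it suffices to bound $|\mathbb{E}[\varphi_k(\tilde X_i)]-\gamma_k(t_i)|$. I would split this by comparing $\tilde X_i$ with an auxiliary discrete process $\bar X_i$ generated by the same scheme and the same noise $Z_0,\dots,Z_{i-1}$ but with the \emph{exact} coefficients $a(t,x)=\sum_k\gamma_k(t)\alpha_k(x)$ and $b(t,x)=\sum_k\gamma_k(t)\beta_k(x)$. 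The term $|\mathbb{E}[\varphi_k(\bar X_i)]-\gamma_k(t_i)|$ is precisely the global weak error of the scheme, obtained by summing the local bound~\eqref{eq:WeakOrderDef} of Definition~\ref{def:weakorder} over the (non-uniform) grid; here the hypothesis $M\ge 2(\lceil p\rceil+1)$ guarantees that $\varphi_k$ and $\phi$ lie in $\mathbb{C}^{2(\lceil p\rceil+1)}_P$ so that the bound applies, and the uniform moment assumption on $\tilde X_i$ absorbs the $(1+|x|^{2r})$ prefactor. The complementary term $|\mathbb{E}[\varphi_k(\tilde X_i)]-\mathbb{E}[\varphi_k(\bar X_i)]|$ is driven purely by the coefficient discrepancy $\sum_k\sup_t|\tilde\gamma_{k,i}(t)-\gamma_k(t)|$, which I would propagate through a discrete stability (Gr\"onwall) estimate using the boundedness and Lipschitz continuity of $\alpha_k,\beta_k,\varphi_k$ supplied by Assumptions~\ref{ass:alphabetaBound} and~\ref{ass:basis}.

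The coefficient discrepancy is where the two-phase design enters. On $[0,h_{\rm II}]$ the slopes are set to zero, so $\tilde\gamma_{k,i}$ is piecewise constant with step $h_{\rm I}=O(\epsilon)$, contributing error $O(Uh_{\rm I})=O(\epsilon)$ by Assumption~\ref{ass:gammaBound}; on the later intervals a Taylor expansion of the $C^2$ function $\gamma_k$ shows that the linear backward-difference extrapolation contributes a bias $O(Uh_{\rm II}^2)=O(\epsilon^{2/p})\le O(\epsilon)$ (using $p\le 2$) plus propagated terms of the form $e_{k,j}+e_{k,j-1}$. Because $\tilde X_i$ depends only on $\hat\gamma_{k,j}$ with $j<i$, substituting these into the stability estimate yields a genuine forward recursion $e_{k,i}\le\epsilon_{\mathrm{QMCI}}+O(\epsilon)+A\max_{k',j<i}e_{k',j}$, whose amplification factor $A$ is exactly the quantity $2\sqrt{T+m}\,dKU^2\exp(2(T+m)Td^2K^2U^4)$ appearing in~\eqref{eq:ShortTimeAssum}. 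The short-time condition forces $A\le\tfrac1{12}$, which closes the geometric series and gives a uniform bound $e_{k,i}=O(\epsilon)$; carrying the same estimate through the final $\phi$-evaluation and adding $\epsilon_{\mathrm{QMCI}}=\epsilon/12$ from~\eqref{eq:epsQMCI} produces $|\hat E-\mathbb{E}[\phi(X_T)]|\le\epsilon$, the individual $1/12$ and $1/24$ budget pieces in the hypotheses and in~\eqref{eq:hIandII} being calibrated so that the contributions sum to $\epsilon$. I expect this Gr\"onwall-type closure---reconciling the nonlinear coupling of estimation errors across time steps with the discretization and extrapolation biases, and tracking the constants so everything fits under the $1/12$ budget---to be the main obstacle.

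Finally, the complexity follows by counting. The circuit $U_{\tilde X_{i+1}}$ in~\eqref{eq:UtilX} contains $i+1$ factors of $U_{F^h_{a,b,t}}$, and by Theorem~\ref{th:QMCI} each \proc{QMCI} call at step $i$ makes $\widetilde O(U/\epsilon_{\mathrm{QMCI}})$ queries to $U_{\tilde X_{i+1}}$, hence $\widetilde O((i+1)/\epsilon_{\mathrm{QMCI}})$ queries to $U_{F^h_{a,b,t}}$ and $\widetilde O(1/\epsilon_{\mathrm{QMCI}})$ queries to each integrand oracle. Summing over the $K$ functions and the $n_{\rm t}$ steps gives $\widetilde O(Kn_{\rm t}^2/\epsilon_{\mathrm{QMCI}})$ queries to $U_{F^h_{a,b,t}}$ and $\widetilde O(Kn_{\rm t}/\epsilon_{\mathrm{QMCI}})$ queries to $U_{\varphi_k}$ and $U_\phi$, with $\log(1/\eta')$ contributing $\log(1/(\epsilon\eta))$ because $n_{\rm t}$ is polynomial in $1/\epsilon$. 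It then remains to substitute the parameter choices~\eqref{eq:hIandII}: these give $h_{\rm I}=O(\epsilon)$ and $h_{\rm II}=O(\epsilon^{1/p})$, so $n_{\rm t}^{\rm I}=O((1/\epsilon)^{1-1/p})$ and $n_{\rm t}^{\rm II}=O((1/\epsilon)^{1/p})$; since $p\le 2$ the latter dominates and $n_{\rm t}=O((1/\epsilon)^{1/p})$. With $\epsilon_{\mathrm{QMCI}}=\epsilon/12$ this yields the stated $\widetilde O((1/\epsilon)^{1+2/p})$ and $\widetilde O((1/\epsilon)^{1+1/p})$ query counts of~\eqref{eq:query1} and~\eqref{eq:query2}.
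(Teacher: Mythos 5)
Your outer architecture (the union bound over all $K(n_{\rm t}-1)+1$ QMCI calls with $\eta'$ as in Eq.~\eqref{eq:etaPr}, an induction on the estimation errors, the two-phase constant/linear extrapolation analysis, and the final query counting with $n_{\rm t}^{\rm I}=O(\epsilon^{-(1-1/p)})$, $n_{\rm t}^{\rm II}=O(\epsilon^{-1/p})$) matches the paper's proof. The gap is in your central decomposition. You compare $\tilde X_i$ with an auxiliary \emph{discrete} process $\bar X_i$ generated by the same scheme $F$ and the same noise but with the exact coefficients, and you bound $|\mathbb{E}[\varphi_k(\tilde X_i)]-\mathbb{E}[\varphi_k(\bar X_i)]|$ by a ``discrete stability (Gr\"onwall) estimate'' driven by the coefficient discrepancy. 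Under the theorem's hypotheses this step is not available: Assumption~\ref{ass:discMethod} provides $F$ only as a black box with a weak-order guarantee, i.e., it controls $|\mathbb{E}[f(F^h_{a,b,t}(x,Z))]-\mathbb{E}[f(Y^{t,x}_{t+h})]|$ for \emph{fixed} coefficients $(a,b)$; nothing in the assumptions controls how $F^h_{a,b,t}(x,Z)$ changes when $(a,b)$ is perturbed to nearby coefficients, and the Lipschitz bounds you invoke from Assumptions~\ref{ass:alphabetaBound} and~\ref{ass:basis} concern the SDE coefficients themselves, not the dependence of the map $F$ on them. A pathwise discrete Gr\"onwall argument of the kind you describe works for an explicit scheme (Euler, SRK), but not for the abstract $F$ of Assumption~\ref{ass:discMethod}; in particular, your claim that the amplification factor comes out ``exactly'' as the constant in Eq.~\eqref{eq:ShortTimeAssum} has no basis, since that constant arises from a continuous-time argument. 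A secondary issue of the same kind: your weak-error term for $\bar X_i$ needs moment bounds on $\bar X_i$, whereas the theorem assumes them only for the algorithm's process $\tilde X_i$.

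The paper arranges the triangle inequality precisely so that this problem never arises: the coefficient perturbation is handled entirely at the level of \emph{continuous} SDE solutions---Lemma~\ref{lem:SDEPert} is a Gr\"onwall bound comparing the true MVSDE solution $X_t$ with the solution $X^{(i)}_t$ of the SDE with the extrapolated coefficients $a_i,b_i$, and this is where the constant of Eq.~\eqref{eq:ShortTimeAssum} originates---while the weak-order property of $F$ is invoked only with fixed coefficients, to compare $\tilde X_{i+1}$ with $X^{(i)}_{t_{i+1}}$ via the telescoping argument over $u_{i+1,k}(t,x)=\mathbb{E}[\varphi_k(X^{(i),t,x}_{t_{i+1}})]$ (this is also why the moment assumption on $\tilde X_i$ suffices). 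If you try to repair your discrete-versus-discrete comparison, the natural route is to insert the two continuous processes between $\tilde X_i$ and $\bar X_i$; but then the weak-order bound and the continuous perturbation lemma are exactly what you need, $\bar X_i$ becomes redundant, and your proof collapses into the paper's. So the decomposition should be reorganized with the intermediate object being the continuous process with perturbed coefficients, not a discrete process with exact coefficients.
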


\newgeometry{bottom=30mm}
\begin{figure}[p]
\begin{algorithm}[H]

\caption{Proposed quantum algorithm for Problem~\ref{prob:main}}\label{alg:main}
\begin{algorithmic}[1]
\small

\Require
\begin{itemize}
    \item time step sizes $h_{\rm I}$ and $h_{\rm II}$ such that $h_{\rm II}/h_{\rm I}\in\mathbb{N}$ and $T/h_{\rm II}-1\in\mathbb{N}$
    \item QMCI accuracy $\epsilon_{\mathrm{QMCI}}$
    \item success probability lower bound $1-\eta$
\end{itemize}

\State Set $n_{\rm t}^{\rm I}=h_{\rm II}/h_{\rm I}$, $n_{\rm t}^{\rm II}\coloneqq T/h_{\rm II}-1$, $n_{\rm t}\coloneqq n_{\rm t}^{\rm I} + n_{\rm t}^{\rm II}$,
\begin{align}
    h_i\coloneqq
    \begin{cases}
        h_{\rm I} & ; ~ i=0,\ldots,n_{\rm t}^{\rm I}-1 \\
        h_{\rm II} & ; ~ i=n_{\rm t}^{\rm I},\ldots,n_{\rm t}
    \end{cases}
    ,
\end{align}
and
\begin{align}
    t_i \coloneqq
    \begin{cases}
        ih_{\rm I} & ; ~ i=0,\ldots,n_{\rm t}^{\rm I} \\
        (i-n_{\rm t}^{\rm I}+1)h_{\rm II} & ; ~ i=n_{\rm t}^{\rm I}+1,\ldots,n_{\rm t}
    \end{cases}
    .
\end{align}

\State Set $\tilde{X}_0 \coloneqq X_0$.

\State For $k=1,\ldots,K$, set $\hat{\gamma}_{k,0}\coloneqq\varphi_k(X_0)$. 

\For{$i=0,...,n_{\rm t}-1$}

\State For $k=1,\ldots,K$, define $\tilde{\gamma}_{k,i}:[0,t_{i+1}]\rightarrow\mathbb{R}$ as
\begin{align}
    \tilde{\gamma}_{k,i}(t)=
    \begin{cases}
        \hat{\gamma}_{k,0}^\prime(t-t_0)+\hat{\gamma}_{k,0} & ; ~ t_0 \le t < t_1 \\
        \qquad \vdots & \\
        \hat{\gamma}_{k,i}^\prime(t-t_i)+\hat{\gamma}_{k,i} & ; ~ t_i \le t \le t_{i+1}
    \end{cases}
    \label{eq:gammatil}
\end{align}
with 
\begin{align}
    \hat{\gamma}_{k,j}^\prime\coloneqq
    \begin{dcases}
        0 & ; ~ j=0,\ldots,n_{\rm t}^{\rm I}-1 \\
        \frac{\hat{\gamma}_{k,j}-\hat{\gamma}_{k,0}}{h_{\rm II}} & ; ~ j=n_{\rm t}^{\rm I} \\
        \frac{\hat{\gamma}_{k,j}-\hat{\gamma}_{k,j-1}}{h_{\rm II}} & ; ~ j=n_{\rm t}^{\rm I}+1,\ldots,n_{\rm t}
    \end{dcases}
    .
\end{align}

\State Using the discretization method $F$, define a $\mathbb{R}^d$-valued discrete random process
\begin{align}
    \tilde{X}_{i+1} \coloneqq F^{h_i}_{a_i,b_i,t_i}(\tilde{X}_i,Z_i).
    \label{eq:Xtil}
\end{align}
Here, $a_i:[0,t_{i+1}]\times\mathbb{R}^d\rightarrow\mathbb{R}^d$ and $b_i:[0,t_{i+1}]\times\mathbb{R}^d\rightarrow\mathbb{R}^{d \times m}$ are defined as $a_i(t,x)=\sum_{k=1}^K \tilde{\gamma}_{k,i}(t)\alpha_k(x)$ and $b_i(t,x)=\sum_{k=1}^K \tilde{\gamma}_{k,i}(t)\beta_k(x)$, respectively, and $Z_i$ is the random variable used in $F$ independent of $Z_0,\ldots,Z_{i-1}$.

\State Construct the quantum circuit $U_{F^{h_i}_{a_i,b_i,t_i}}$, and combine it with the ones constructed in the previous steps as
\begin{align}
    & U_{\tilde{X}_{i+1}} = 
    \left(I^{\otimes i} \otimes U_{F^{h_i}_{a_i,b_i,t_i}}\right)\cdots
    \left(I \otimes U_{F^{h_1}_{a_1,b_1,t_1}}\otimes I^{\otimes(i-1)}\right)
    \left(U_{F^{h_0}_{a_0,b_0,t_0}}\otimes I^{\otimes i}\right)
    \left(U_{X_0}\otimes I^{\otimes (i+1)}\right)
    \label{eq:UtilX}
\end{align}
to get the quantum circuit $U_{\tilde{X}_{i+1}}$ (see Figure~\ref{fig:UXi}).
Here, $U_{X_0}$ is the quantum circuit to set a register to $\ket{X_0}$: $U_{X_0}\ket{0}=\ket{X_0}$, and $I$ is an identity operator on the Hilbert space of each register.

\If{$i \le n_{\rm t}-2$}
\State For $k=1,...,K$, estimate $\mathbb{E}[\varphi_k(\tilde{X}_{i+1})]$ by $\proc{QMCI}(U_{\tilde{X}_{i+1}},U_{\varphi_k},\epsilon_{\rm QMCI},\eta^\prime,U)$, where
\begin{align}
\eta^\prime\coloneqq \frac{\eta}{K(n_{\rm t}-1)+1},
\label{eq:etaPr}
\end{align}
and let the estimate be $\hat{\gamma}_{k,i+1}$.
\Else
\State Estimate $\mathbb{E}[\phi(\tilde{X}_{n_{\rm t}})]$ using $\proc{QMCI}(U_{\tilde{X}_{n_{\rm t}}},U_{\phi},\epsilon_{\rm QMCI},\eta^\prime,U)$, and output the result $\hat{E}$.
\EndIf

\EndFor
\normalsize

\end{algorithmic}
\end{algorithm}
\end{figure}
\restoregeometry

\begin{figure}[t]
\begin{center}

\begin{quantikz}[row sep={0.8cm,between origins}, column sep={0.2cm}]
\lstick{$\ket{X_0}$} & \gate{U_{X_0}} & \gate[2]{U_{F^{h_0}_{a_0,b_0,t_0}}} & \qw  & \qw & \cdots & & \qw & \qw & \rstick{$X_0$} \\
& \qw & & \gate[2]{U_{F^{h_1}_{a_1,b_1,t_1}}} & \qw  & \cdots & & \qw & \qw & \rstick{$\tilde{X}_1$}\\
& \qw & \qw & \qw & \qw & \cdots & & \qw & \qw & \rstick{$\tilde{X}_{2}$}\\
& \vdots & & & & & & & \\
& \qw & \qw & \qw & \qw & \cdots & & \gate[2]{U_{F^{h_i}_{a_i,b_i,t_i}}} & \qw & \rstick{$\tilde{X}_{i}$}\\
& \qw & \qw & \qw & \qw & \cdots & & & \qw& \rstick{$\tilde{X}_{i+1}$}
\end{quantikz}
\caption{The quantum circuit $U_{\tilde{X}_{i+1}}$ to generate the state that encodes the distribution of $\tilde{X}_{i+1}$.}
\label{fig:UXi}
\end{center}
\end{figure}

Leaving the proof to Appendix~\ref{app:proof}, we now make some comments on Algorithm~\ref{alg:main}.
First, note that Algorithm~\ref{alg:main} is divided into two stages with respect to the time step size: in the first stage, $i =1,\ldots, n_{\rm t}^{\rm I}-1$, we use $h_{\rm I}=O(\epsilon)$, and in the second stage, $i =n_{\rm t}^{\rm I},\ldots,n_{\rm t}$, we use $h_{\rm II}=O(\epsilon^{1/p})$.
Although the discussion in Sec.~\ref{sec:idea} motivates us to use the time step size of $O(\epsilon^{1/p})$ in all the steps in the algorithm, there is an issue in the early steps.
That is, for linear extrapolation of $\gamma_k(t)$, we need (estimations of) its values at two time points, but at the initial time $t=0$, we have only $\gamma_k(0)$.
One may think that using constant extrapolation only at the beginning solves the issue.
That is, we may use the constant extrapolation of $\gamma_k(t)$ for the first interval $[0,t_1]$ with $t_1=h_1=O(\epsilon)$ and get $\hat{\gamma}_{k,1}$, an estimate of $\gamma_k(t_1)$, and after that, we may perform linear extrapolation of $\gamma_k(t)$, setting the step size $h_i=O(\epsilon^{1/p})$ at the step $i=1$ and later.
However, extrapolating the values of $\gamma_k(t)$ at points with a small interval of $O(\epsilon)$ to a wider range of $O(\epsilon^{1/p})$ width induces a large extrapolation error, which cannot be suppressed to $O(\epsilon)$.
We thus repeat constant extrapolation with $O(\epsilon)$ step size until the time $t$ becomes $O(\epsilon^{1/p})$.
After that, we increase the step size to $O(\epsilon^{1/p})$ and perform linear extrapolation of $\gamma_k(t)$ using the data points with $O(\epsilon^{1/p})$ interval.
With the width of the extrapolation being of the same order as the data point interval, the extrapolation can be suppressed to $O(\epsilon)$.
Here, it may be possible to take a more sophisticated strategy than repeating constant extrapolation in the first stage, e.g., linear interpolation with a gradually increasing step size.
Now, we simply adopt repeated constant extrapolation, because with this strategy, the first stage has $O(1/\epsilon^{1-1/p})$ steps and makes a subdominant contribution to the algorithm's complexity compared to the second stage, which has $O(1/\epsilon^{1/p})$ steps, for $p\le2$.

Next, let us see that $U_{\tilde{X}_{i+1}}$ in Eq.~\eqref{eq:UtilX} indeed generates a quantum state that encodes the distribution of $\tilde{X}_{i+1}$.
We see this by considering how the sequence of $U_{F^{h_i}_{a_i,b_i,t_i}}$ transforms a quantum state initially set to $\ket{X_0}\ket{0}^{\otimes (i+1)}$.
Combining
\begin{align}
    U_{F^{h_j}_{a_j,b_j,t_j}}\ket{x_{j+1}}\ket{0} = \sum_{x_{j+1}\in\mathcal{G}}  \sqrt{\bar{q}_{F^{h_j}_{a_j,b_j,t_j}}(x_{j+1}|x_j)}  \ket{x_j}\ket{x_{j+1}}
\end{align}
for $j=0,...,i$, we have
\begin{align}
    U_{F^{h_i}_{a_i,b_i,t_i}}\ket{X_0}\ket{0}^{\otimes (i+1)}=\sum_{x_{i+1}\in\mathcal{G}} \ket{\psi_{x_{i+1}}} \ket{x_{i+1}},
\end{align}
where the unnormalized state $\ket{\psi(x_{i+1})}$ is given by
\begin{align}
    & \ket{\psi(x_{i+1})}\coloneqq \nonumber \\
    & \ \sum_{x_1\in\mathcal{G}} \cdots \sum_{x_{i}\in\mathcal{G}} \sqrt{\bar{q}_{F^{h_i}_{a_i,b_i,t_i}}(x_{i+1}|x_i) \cdots \bar{q}_{F^{h_0}_{a_0,b_0,t_0}}(x_1|X_0)}\ket{X_0}\ket{x_1}\cdots\ket{x_i}.
\end{align}
Since
\begin{align}
    \left|~\ket{\psi(x_{i+1})}~\right|^2 & = \sum_{x_1\in\mathcal{G}} \cdots \sum_{x_{i}\in\mathcal{G}}\bar{q}_{F^{h_i}_{a_i,b_i,t_i}}(x_{i+1}|x_i) \cdots \bar{q}_{F^{h_0}_{a_0,b_0,t_0}}(x_1|X_0) \nonumber \\
    & \approx \int dx_1 \cdots \int dx_i q_{F^{h_i}_{a_i,b_i,t_i}}(x_{i+1}|x_i) \cdots q_{F^{h_0}_{a_0,b_0,t_0}}(x_1|X_0)
\end{align}
is the transition probability density from $X_0$ to $\tilde{X}_{i+1}=x_{i+1}$, we see that $U_{\tilde{X}_{i+1}}$ acts as desired in the form of Eq.~\eqref{eq:SPOra2}.

Lastly, we note that the combination of a high weak-order SDE discretization method and an extrapolation of $\gamma_k$ can also be used in the classical particle method.
That is, we can run the particle method, estimating $\gamma_k(t_i)$ as Eq.~\eqref{eq:gammaSampleAve}, extrapolating it into the next time step $[t_i,t_{i+1}]$, and applying the high-order scheme to generate $\tilde{X}^n_{i+1}$, the particles at time $t_{i+1}$.
Since we can take $h$ of order $\epsilon^{1/p}$ again, this approach yields the complexity of $O(1/\epsilon^{2+1/p})$.
Compared to this, the proposed quantum method with $O(1/\epsilon^{1+2/p})$ complexity still has an advantage if $p>1$.

\section{Numerical experiments \label{sec:demo}}

\begin{algorithm}
\caption{Classical emulation of Algorithm~\ref{alg:main}}\label{alg:mainSim}
\begin{algorithmic}[1]

\Require
\begin{itemize}
    \item time step sizes $h_{\rm I}$ and $h_{\rm II}$ such that $h_{\rm II}/h_{\rm I}\in\mathbb{N}$ and $T/h_{\rm II}-1\in\mathbb{N}$
    \item Number of particles $N$
    \item Parameters of \proc{QMCIML}: $N_{\rm G},n_{\rm shot}$
    \item $u_i,l_i$, upper and lower bounds of $\varphi_k(\tilde{X}_{i+1})$ and $\phi(\tilde{X}_{n_{\rm t}})$
\end{itemize}

\State Set $n_{\rm t}^{\rm I},n_{\rm t}^{\rm II},n_{\rm t},\{h_i\}_i,\{t_i\}_i,\{\hat{\gamma}_{k,0}\}_k$ as Algorithm~\ref{alg:main}.

\State For $n=1,\ldots,N$, set $\tilde{X}^n_0 \coloneqq X_0$.

\For{$i=0,...,n_{\rm t}-1$}

\State Define $\{\tilde{\gamma}_{k,i}\}_k$ as Algorithm~\ref{alg:main}.

\State For $n=1,\ldots,N$, compute
\begin{align}
    \tilde{X}_{i+1}^n \coloneqq F^{h_i}_{a_i,b_i,t_i}(\tilde{X}_i^n,Z_i^n).
\end{align}
with $a_i$ and $b_i$ defined as Algorithm~\ref{alg:main} and $\{Z_i^n\}_n$, i.i.d. samples from the same distribution as $Z_i$.

\If{$i \le n_{\rm t}-2$}
\For{$k=1,...,K$}
\State Compute
\begin{align}
    \bar{\gamma}_{k,i+1}\coloneqq \frac{1}{N}\sum_{n=1}^N \max\left\{ \min\left\{\varphi_k(\tilde{X}_{i+1}^n), u_i\right\},l_i\right\}.
\end{align}

\State Run $\proc{QMCIML}\left(\frac{\bar{\gamma}_{k,i+1}-l_i}{u_i-l_i},N_{\rm G},n_{\rm shot},u_i,l_i\right)$ and let the output be $\hat{\gamma}_{k,i+1}^{\rm sc}$.

\State Let $\hat{\gamma}_{k,i+1}\coloneqq\hat{\gamma}_{k,i+1}^{\rm sc}(u_i-l_i)+l_i$
\EndFor

\Else

\State Compute
\begin{align}
    \bar{\phi}\coloneqq \frac{1}{N}\sum_{n=1}^N \max\left\{ \min\left\{\phi(\tilde{X}_{n_{\rm t}}^n), u_i\right\},l_i\right\}.
\end{align}

\State Run $\proc{QMCIML}\left(\frac{\bar{\phi}-l_{n_{\rm t}}}{u_{n_{\rm t}}-l_{n_{\rm t}}},N_{\rm G},n_{\rm shot},u_{n_{\rm t}},l_{n_{\rm t}}\right)$ and let the output be $\hat{\phi}_{\rm sc}$.

\State Output $\hat{E}\coloneqq\hat{\phi}_{\rm sc}(u_{n_{\rm t}}-l_{n_{\rm t}})+l_{n_{\rm t}}$
\EndIf

\EndFor

\end{algorithmic}
\end{algorithm}

\begin{algorithm}
\caption{\proc{QMCIML}, classical emulation of the MLE-based QMCI}\label{alg:QMCIML}
\begin{algorithmic}[1]

\Require
\begin{itemize}
    \item True value of the expectation $\mu\in[0,1]$
    \item Maximum number of applying the Grover operator $N_{\rm G}=2^{M_{\rm G}}, M_{\rm G}\in\mathbb{N}$
    \item Shot number $n_{\rm shot}$
\end{itemize}

\For{$j\in\mathcal{N}\coloneqq\{0,2^0,2^1,\ldots,2^{M_{\rm G}}\}$}

\State Sample $n_{\rm shot}$ values from the Bernoulli distribution with the probability of 1 being $\sin^2((2j+1)\theta_\mu)$, where $\theta_\mu \coloneqq \arcsin(\sqrt{\mu})$, and let the numbers of 0's and 1's be $n_{0,j}$ and $n_{1,j}$, respectively.

\EndFor

\State Find $\hat{\theta}\coloneqq\operatorname*{argmax}_{\theta\in\left[0,\frac{\pi}{2}\right]} \prod_{j\in\mathcal{N}} \sin^{2n_{1,j}}((2j+1)\theta)\cos^{2n_{0,j}}((2j+1)\theta)$ and output $\sin^2\hat{\theta}$.

\end{algorithmic}
\end{algorithm}

Finally, to demonstrate the validity of our algorithm, we conduct numerical experiments.
Because our algorithm is assumed to run on a fault-tolerant quantum computer that is not available today, we conduct a kind of classical emulation of Algorithm~\ref{alg:main}.
It is a Monte Carlo-based one like the particle method: we generated many paths (particles) of $\tilde{X}_i$ using the discretization method $F$ and the same extrapolation of $\gamma_k$ as Algorithm~\ref{alg:main}.
The estimation of $\gamma_k(t_i)$ by QMCI is replaced with the following.
Setting the number of particles as large as possible, we substitute $\bar{\gamma}_{k,i}$, the average of $\varphi_k(\tilde{X}_i)$ over the particles, for the true value of $\gamma_k(t_i)$.
Then, to emulate the QMCI error, we sample a value from the probability distribution of the output of QMCI applied to an expectation estimation with the true value $\bar{\gamma}_{k,i}$, and let the sampled value be $\hat{\gamma}_{k,i}$, an estimate of $\gamma_k(t_i)$ used in extrapolation.

The entire procedure is described as Algorithm~\ref{alg:mainSim}.
Here, although Algorithm~\ref{alg:main} uses \proc{QMCI}, the QMCI for integrands with bounded $L_2$ norm adopted from Ref.~\cite{montanaro2015}, we replace it with \proc{QMCIML}, the maximum likelihood estimation (MLE)-based QMCI \cite{suzuki2020amplitude}, for which sampling from the output distribution is much simpler.
The procedure for the sampling is given as Algorithm~\ref{alg:QMCIML}.
When MLE-based QMCI is run on quantum hardware, 0 or 1 in step 2 is obtained through a measurement on a quantum state, whose preparation involves $2j+1$ queries to the density loading circuit.
Taking $N_{\rm G}$ of order $1/\epsilon$, for which the total query number is $O(1/\epsilon)$, MLE-based QMCI yields an $\epsilon$-approximation of the expectation with high probability. 
Note that in expectation calculations in Algorithm~\ref{alg:mainSim}, we clip the integrand between $u_i$ and $l_i$ to make the situation match \proc{QMCIML}, in which the integrand is assumed to be bounded.
We expect that this handling, though it causes another error, suffices for the current purpose of validating our algorithm, with $u_i$ and $l_i$ set appropriately.

\subsection{Case 1: Shimizu-Yamada model}

First, we apply Algorithm~\ref{alg:mainSim} to Problem~\ref{prob:main} with
\begin{align}
    & d=1, K=2, \nonumber \\
    & \varphi_1(x) = 1,~\varphi_2(x)=\phi(x)=x, \nonumber \\
    & \alpha_1(x)=0,~\alpha_2(x)=-1, \nonumber \\
    & \beta_1(x)=1,~\beta_2(x)=0, \nonumber \\
    & T = 2, X_0=1,
    \label{eq:ProbParam}
\end{align}
which corresponds to the MVSDE
\begin{align}
    dX_t = -\mathbb{E}[X_t]dt+dW_t.
\end{align}
This is a specific case of the Shimizu-Yamada model (Example 4.2 in Ref.~\cite{Belomestny2018}), which originates from Ref.~\cite{ShimizuYamada1972}.
Ref.~\cite{Belomestny2018} showed that in this case, the MVSDE can be analytically solved as
\begin{align}
    X_t=X_0 e^{-t}+W_t,
\end{align}
for which
\begin{align}
    \gamma_2(t)=\mathbb{E}[\varphi_2(X_t)]=\mathbb{E}[\phi(X_t)]=\mathbb{E}[X_t]=X_0e^{-t}.
\end{align}
Trivially, $\gamma_1(t)=1$.
We aim to see whether Algorithm~\ref{alg:mainSim}, a proxy for Algorithm~\ref{alg:main}, works as expected via comparing its numerical result with this analytical formula.
The parameters in Algorithm~\ref{alg:mainSim} are set as\footnote{For a real number $x$, $\left\lceil x+\frac{1}{2} \right\rceil$ is the result of rounding $x$ to the nearest integer.}
\begin{align}
    & h_{\rm I}=\frac{T}{9}\varepsilon, h_{\rm II}=\frac{T}{3}\sqrt{\varepsilon}, N_{\rm G} = 2^{M_{\rm G}},M_{\rm G}=\left\lceil\log_2 \left(\frac{32}{\varepsilon}\right)+\frac{1}{2}\right\rceil, \nonumber \\
    & N=10^6, n_{\rm shot}=30,\nonumber \\
    & u_i=X_0e^{\alpha t_i}+5\sqrt{t_i},l_i=X_0e^{\alpha t_i}-5\sqrt{t_i}.
    \label{eq:AlgoParam}
\end{align}
Here, $h_{\rm I}$, $h_{\rm II}$, and $N_{\rm G}$ are set with an auxiliary parameter $\varepsilon$ so that they scale as $O(\varepsilon)$, $O(\sqrt{\varepsilon})$, and $O(1/\varepsilon)$, respectively.
This setting reflects the one in Algorithm~\ref{alg:main} with $p=2$, where we set $h_{\rm I}=O(\epsilon)$, $h_{\rm II}=O(\sqrt{\epsilon})$, and $\epsilon_{\rm QMCI}=O(\epsilon)$ to achieve the accuracy $\epsilon$, and what we aim to numerically check is that the error in Algorithm~\ref{alg:mainSim} scales as $O(\varepsilon)$ with this setting.
$N$ is set as large as possible, so that the Monte Carlo statistical error is subdominant compared to the QMCI error and the time discretization error.
$u_i$ and $l_i$ corresponds to the $5\sigma$ range of $X_{t_i}$.
As the SDE discretization method, we use RI1 in \texttt{DifferentialEquations.jl}~\cite{rackauckas2017differentialequations} with $p=2$, an implementation of RI1RM in \cite{Rossler01032006}.

The result is as follows.
FIG.~\ref{fig:RMSE} shows the error of Algorithm~\ref{alg:mainSim} versus $\varepsilon$, which is evaluated as the root mean square deviation of the outputs $\hat{E}$ from the true value of $\mathbb{E}[X_T]$ in 10 runs of the algorithm.
The error decreases as $\varepsilon$ decreases with scaling as $O(\varepsilon)$, which indicates that the algorithm is working as expected.
FIG.~\ref{fig:uMeanvst} shows the values of $\hat{\gamma}_{2,i}$, the estimate of $\gamma_2(t_i)=\mathbb{E}[X_{t_i}]$ by Algorithm~\ref{alg:mainSim} with $\varepsilon=1/12$.
They closely fit the exact values, which also indicates that Algorithm~\ref{alg:mainSim} works accurately. 
Viewing Algorithm~\ref{alg:mainSim} as a proxy of Algorithm~\ref{alg:main}, we plot its computational complexity versus the realized error in FIG.~\ref{fig:Query}.
Here, the complexity is measured as the total number of queries to $U_{F^h_{a,b,t}}$, supposing that $\proc{QMCIML}$ in Algorithm~\ref{alg:mainSim} were run as a quantum algorithm that queries $U_{F^h_{a,b,t}}$.
The complexity is roughly proportional to the inverse square of the error, which is consistent with Theorem~\ref{th:main}.

\begin{figure}
\centering
\begin{minipage}{\linewidth}
    \centering
    \includegraphics[width=0.59\linewidth]{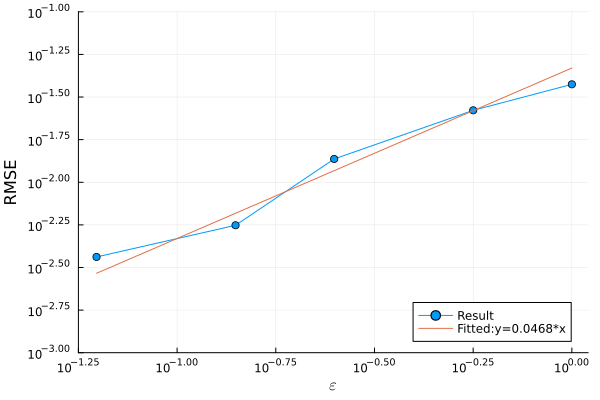}
    \caption{The blue curve represents the root mean square deviation of the outputs $\hat{E}$ from the true value of $\mathbb{E}[X_T]$ in 10 runs of Algorithm~\ref{alg:mainSim} for the Shimizu-Yamada model, where the parameters are set as Eqs.~\eqref{eq:ProbParam} and \eqref{eq:AlgoParam} with varying $\varepsilon$. The red line is the slope-1 line fitted to the blue curve using least squares in the log-log space.}
\label{fig:RMSE}
\end{minipage}

\vspace{0.5em}

\begin{minipage}{\linewidth}
    \centering
    \includegraphics[width=0.59\linewidth]{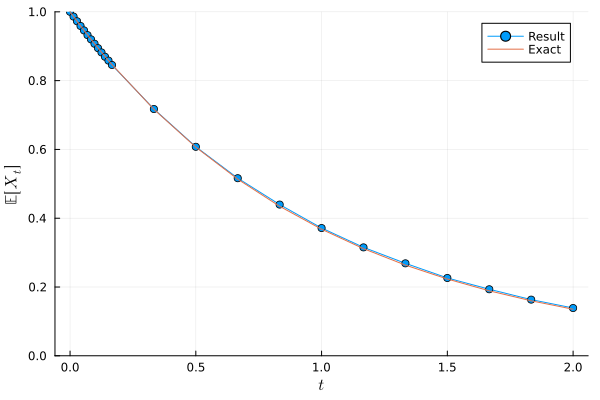}
    \caption{$\mathbb{E}[X_{t}]$ estimated by Algorithm~\ref{alg:mainSim} with $\varepsilon=1/12$ (blue) and its exact value (red) for the Shimizu-Yamada model.}
    \label{fig:uMeanvst}
\end{minipage}

\vspace{0.5em}

\begin{minipage}{\linewidth}
    \centering
    \includegraphics[width=0.59\linewidth]{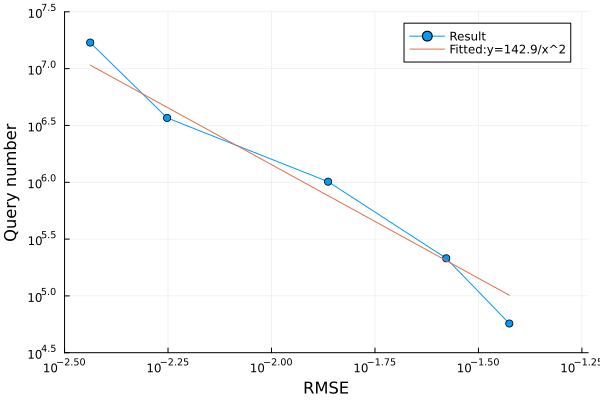}
    \caption{The blue curve represents the total number of queries to $U_{F^h_{a,b,t}}$ in Algorithm~\ref{alg:mainSim} for the Shimizu-Yamada model versus the realized error in the algorithm, which is plotted in FIG.~\ref{fig:RMSE}. Here, we assume that $\proc{QMCIML}$ in it were run as a quantum algorithm that queries $U_{F^h_{a,b,t}}$. The red one represents the function $y=a/x^2$ with $a$ tuned via the least-squares fitting to the blue curve in the log-log space.}
    \label{fig:Query}
\end{minipage}
\end{figure}

\subsection{Case 2: Kuramoto-Shinomoto-Sakaguchi model with a stochastic volatility}

The second example is a multidimensional problem with a non-commutative noise:
\begin{align}
    & d=2, K=3, \nonumber \\
    & \varphi_1(\theta,\sigma) = 1,~\varphi_2(\theta,\sigma)=\phi(\theta,\sigma)=\sin \theta,~\varphi_3(\theta,\sigma)=\cos \theta, \nonumber \\
    & \alpha_1(\theta,\sigma)=\begin{pmatrix}0 \\ 0\end{pmatrix},~\alpha_2(\theta,\sigma)=\begin{pmatrix}\cos \theta \\ 0\end{pmatrix},~\alpha_3(\theta,\sigma)=\begin{pmatrix}-\sin \theta \\ 0\end{pmatrix}, \nonumber \\
    & \beta_1(\theta,\sigma)=\begin{pmatrix}\sigma & 0 \\ 0 & 0.25\sigma\end{pmatrix}, ~\beta_2(x)=\beta_3(x)=\begin{pmatrix}0 & 0 \\ 0 & 0\end{pmatrix}, \nonumber \\
    & T = 5, X_0=\begin{pmatrix}1 \\ 0.5\end{pmatrix},
    \label{eq:ProbParamK}
\end{align}
which corresponds to the following MVSDE for $X_t=(\theta_t,\sigma_t)$:
\begin{align}
    d\theta_t &= \int \sin(u-\theta_t) \mu_t(du) \cdot dt+\sigma_t dW^1_t, \nonumber \\
    d\sigma_t &= 0.25 \sigma_t dW^2_t.
\end{align}
This is an extension of the Kuramoto-Shinomoto-Sakaguchi (KSS) model~\cite{Shinomoto1986,Sakaguchi1988}, which is widely used to model synchronization phenomena in fields such as neuroscience and biophysics, by promoting $\sigma$, which is usually a constant, to a stochastic process $\sigma_t$.

We apply Algorithm~\ref{alg:mainSim} to this problem with the same setting as Eq.~\eqref{eq:AlgoParam} except $u_i$ and $l_i$ set to $\pm 1$.
The result is shown in Figs.~\ref{fig:RMSEK}, \ref{fig:uMeanvstK}, and \ref{fig:QueryK}, which are similar to Figs.~\ref{fig:RMSE}, \ref{fig:uMeanvst}, and \ref{fig:Query} for the Shimizu-Yamada model.
Now, we do not have the analytic solution of the MVSDE, and so we use the classical particle method with large $N$ and small $h$ as a benchmark.
Concretely, we run Algorithm~\ref{alg:particle} with $N=10^6$ and $h=T/256$ and use the sample average of $\phi(\theta_T,\sigma_T)$ over the particles in place of the exact expectation. 
Again, the figures indicate that the proposed method is working as expected: the error decreases roughly linearly with respect to $\varepsilon$, the estimated expectation almost coincides with the benchmark, and the complexity scales roughly proportional to the inverse square of the error.

\begin{figure}
\centering
\begin{minipage}{\linewidth}
    \centering
    \includegraphics[width=0.59\linewidth]{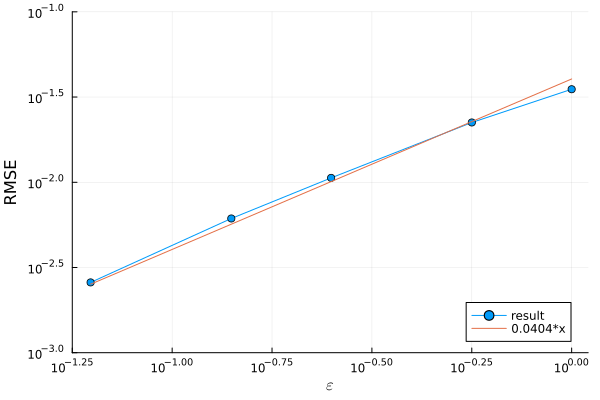}
    \caption{The blue curve represents the root mean square deviation of the outputs $\hat{E}$ from the benchmark value of $\mathbb{E}[\sin(\theta_T)]$ in 10 runs of Algorithm~\ref{alg:mainSim} for the KSS model, where the parameters are set as Eqs.~\eqref{eq:ProbParamK} and \eqref{eq:AlgoParam} (except $u_i$ and $l_i$ set to $\pm1$) with varying $\varepsilon$. The benchmark is the Euler discretization-based particle method with $N=10^6$ and $h=T/256$. The red line is the slope-1 line fitted to the blue curve using least squares in the log-log space.}
\label{fig:RMSEK}
\end{minipage}

\vspace{0.5em}

\begin{minipage}{\linewidth}
    \centering
    \includegraphics[width=0.59\linewidth]{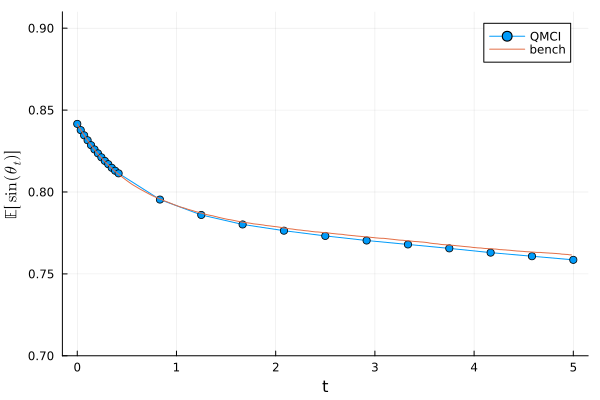}
    \caption{$\mathbb{E}[\sin(\theta_{t})]$ estimated by Algorithm~\ref{alg:mainSim} with $\varepsilon=1/12$ (blue) and the benchmark value (red) for the KSS model.}
    \label{fig:uMeanvstK}
\end{minipage}

\vspace{0.5em}

\begin{minipage}{\linewidth}
    \centering
    \includegraphics[width=0.59\linewidth]{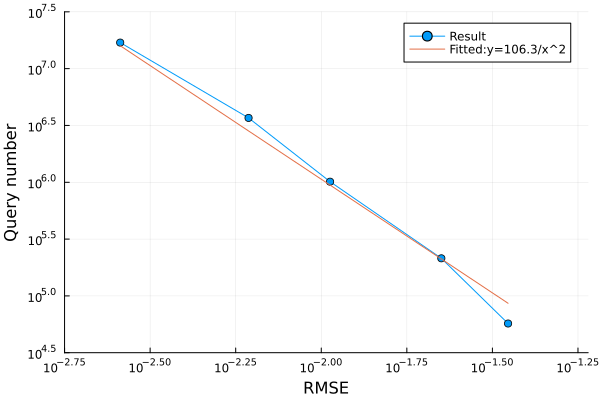}
    \caption{The blue curve represents the total number of queries to $U_{F^h_{a,b,t}}$ in Algorithm~\ref{alg:mainSim} for the KSS model versus the realized error in the algorithm, which is plotted in FIG.~\ref{fig:RMSEK}. The query number is counted similarly to Fig.~\ref{fig:Query}. The red one represents the function $y=a/x^2$ with $a$ tuned via the least-squares fitting to the blue curve in the log-log space.}
    \label{fig:QueryK}
\end{minipage}
\end{figure}

\section{Summary \label{sec:sum}}

In this paper, we considered, for the first time, applying QMCI to MVSDEs.
Taking MVSDEs with the coefficient functions depending on the law of the solution through some expectations $\gamma_k(t)=\mathbb{E}[\varphi_k(X_t)]$ as a first target, we consider the quantum approach in which $\gamma_k(t)$ is estimated by QMCI.
To address the quadratic complexity scaling of the quantum approach on the number of time steps, which is worse than the classical method, we consider time evolution with higher-order discretization method such as the SRK method and linear extrapolation of $\gamma_k(t)$, which allows us to take a time step size $h$ of $O(\epsilon^{1/p})$.
Combining this with QMCI, we devised a quantum algorithm that outputs an $\epsilon$-approximation of the expectation of the given function at the terminal time with $O(1/\epsilon^{1+2/p})$ complexity, which shows an improvement compared to $O(1/\epsilon^{3})$ complexity of the classical method.
To confirm this performance of our algorithm, we conducted numerical demonstrations on the Shimizu-Yamada model and the KSS model with a stochastic volatility, where we saw that the accuracy and complexity behave as expected.

Our algorithm may be improved in some ways.
A straightforward improvement is to use a higher-order scheme---quadratic, cubic, and so on---for extrapolating $\gamma_k(t)$ than the currently adopted linear one.
The current linear extrapolation induces an $O(h^2)$ error, which dominates the SDE time discretization error of $O(h^p)$ in a discretization method with order $p>2$, thereby discouraging the use of such methods. 
Combining a higher-order extrapolation method and SDE discretization method can allow us to take larger $h$ for a given accuracy, making the algorithm more efficient.
A possible issue is that when we take a large step size in the second stage, the first stage involving constant extrapolation with a smaller step size may be a bottleneck.
We may adopt a more sophisticated approach in the first stage, e.g., gradually increasing the extrapolation order from constant to higher as time steps progress.
We will study such an advanced procedure in future works.

Besides, in future works, to explore the practical utility of the proposed quantum algorithm, we would consider its applications to more concrete problems involving MVSDEs in various fields such as finance and fluid mechanics.

\section*{Abbreviations}

QMCI, quantum Monte Carlo integration; SDE, stochastic differential equation; MVSDE, McKean-Vlasov stochastic differential equation; SRK, stochastic Runge-Kutta; MLE, maximum likelihood estimation; KSS, Kuramoto-Shinomoto-Sakaguchi.

\section*{Declaration}

\subsection*{Availability of data and materials}

The code and data for this study can be obtained at https://github.com/Koichi-Miyamoto/MVQMCI.

\subsection*{Competing interests}

The author declares no competing interests.

\subsection*{Funding}

The author is supported by MEXT Quantum Leap Flagship Program (MEXT Q-LEAP) Grant no. JPMXS0120319794 and JST COI-NEXT Program Grant No. JPMJPF2014.

\subsection*{Authors' contributions}

KM as the sole author of the manuscript, conceived, designed, and performed the analysis; he also wrote and reviewed the paper. The author read and approved the final manuscript.

\section*{Acknowledgements}

Not applicable.

\begin{appendices}

\section{Stochastic Runge-Kutta methods for weak approximation \label{app:SRK}}

Here, we present a brief explanation of SRK methods for weak approximation, leaving more details to Ref.~\cite{Rossler01032006}.

Consider an Ito SDE $dY_t=a(t,Y_t) dt + b(t,Y_t) dW_t$ defined on an interval $[t_0,t_1]$ with coefficient functions $a:[t_0,t_1]\times\mathbb{R}^d \rightarrow \mathbb{R}^d$, $b:[t_0,t_1]\times\mathbb{R}^d \rightarrow \mathbb{R}^{d \times m}$.
Suppose that the $\mathbb{R}^{d}$-valued random process $Y_t$ has a deterministic initial value $Y_{t_0}=y\in\mathbb{R}^d$.
For $h\in[0,t_1-t_0]$, the $s$-stage SRK method for $Y_t$ in a general form is given by
\begin{align}
    &F^h_{a,b,t_0}(y,Z)=y+\sum_{i=1}^{s} z^{(0,0)}_i a\left(t_0+c^{(0,0)}_i h, H^{(0,0)}_{i}\right) \nonumber \\
    &\qquad \qquad \qquad +\sum_{i=1}^{s}\sum_{k=1}^{m}\sum_{\nu\in \mathcal{M}}z^{(k,\nu)}_i b^{k}\left(t_0+c^{(k,\nu)}_i h,H^{(k,\nu)}_{i}\right)
    \label{eq:SRK}
\end{align}
with
\begin{align}
    &H^{(k,\nu)}_{i}=y+\sum_{j=1}^{s} z^{(k,\nu),(0,0)}_{ij} a\left(t_0+c^{(0,0)}_j h,H^{(0,0)}_{j}\right) \nonumber \\
    & \qquad \qquad +\sum_{j=1}^{s}\sum_{r=1}^{m}\sum_{\mu\in \mathcal{M}} z^{(k,\nu),(r,\mu)}_{ij} b^{r}\left(t_0+c^{(r,\mu)}_j h,H^{(r,\mu)}_{j}\right) \nonumber \\
    & z^{(k,\nu)}_i=
    \begin{cases}
        \alpha_{i} h & ; ~ k=0,\nu=0 \\
        \sum_{\iota\in \mathcal{M}}{\gamma^{(\iota)}_{i}}^{(k,\nu)} \theta_{\iota}(h) & ; ~ k\in[m],\nu\in\mathcal{M}
    \end{cases}
    \nonumber \\
    & z^{(k,\nu),(r,\mu)}_{ij} =
    \begin{cases}
        A^{(k,\nu),(0,0)}_{ij} h & ; ~ {\rm if } ~ r=0,\mu=0 \\
        \sum_{\iota\in \mathcal{M}}{B^{(\iota)}_{ij}}^{(k,\nu),(r,\mu)} \theta_{\iota}(h) & ; ~ {\rm if } ~ r\in[m],\mu\in\mathcal{M}
    \end{cases}
    \nonumber \\
    & c^{(k,\nu)}_i=\sum_{j=1}^s A^{(k,\nu),(0,0)}_{ij},
\end{align}
for $i,j\in[s]$ and $(k,\nu)\in \{(0,0)\} \cup [m]\times\mathcal{M}$.
Here, $b^k$ is the $k$-th column of $b$, and $\alpha_i$, ${\gamma^{(\iota)}_{i}}^{(k,\nu)}$, $A^{(k,\nu),(0,0)}_{ij}$, and ${B^{(\iota)}_{ij}}^{(k,\nu),(r,\mu)}$ are real constant parameters.
$\mathcal{M}$ is a finite set of multi-indices with size $|\mathcal{M}|=\kappa$.
$\theta_\nu(h),\nu\in\mathcal{M}$ are some random variables dependent on $h$, collectively denoted by $Z$, and satisfy $\theta_\nu(0)=0$ and
\begin{align}
    \mathbb{E}\left[\left(\theta_{\nu_1}(h)\right)^{p_1} \cdots \left(\theta_{\nu_\kappa}(h)\right)^{p_\kappa}\right]=O(h^{(p_1+\cdots+p_\kappa)/2})
\end{align}
for any $p_1,\ldots,p_\kappa\in\mathbb{N}_0$ and $\nu_1,\ldots,\nu_\kappa\in\mathcal{M}$.
Roughly speaking, this means that $\theta_\nu(h)=O(h^{1/2})$.
We also define $\theta_0(h)=h$.

Let us define $\tilde{Y}_{t_0+h} \coloneqq F^h_{a,b,t_0}(y,Z)$ and consider $f(\tilde{Y}_{t_0+h})$ for some $f:\mathbb{R}^d \rightarrow \mathbb{R}$.
Assuming that $a$, $b$, and $f$ are sufficiently smooth, plugging Eq.~\eqref{eq:SRK} into $f(\tilde{Y}_{t_0+h})$ and performing the Taylor expansion leads to an expression of $f(\tilde{Y}_{t_0+h})$ as a series with each term being a product of $z^{(k,\nu)}_i$, $z^{(k,\nu),(r,\mu)}_{ij}$, and a deterministic number.
This yields a numerous number of terms, but we can cleanly write down the series using the stochastic rooted tree analysis.
We consider tree graphs, each of which consists of nodes that are labeled by positive integers and take $m+2$ types: the root node $\otimes$, the deterministic node $\CIRCLE$, and stochastic nodes $\Circle_j$ for $j\in[m]$ (see Fig.~\ref{fig:tree} as examples).
Generating trees and associating some quantities with each tree under a certain rule, we obtain the following formula of $f(\tilde{Y}_{t_0+h})$: for $n$
\begin{align}
    f(\tilde{Y}_{t_0+h})=\sum_{\substack{\mathbf{t} \in LTS(\Delta) \\ l(\mathbf{t}) \le n+1}} \sum_{j_1,\ldots,j_{s(\mathbf{t})}=1}^m \frac{\gamma(\mathbf{t}) \cdot \Phi_S(\mathbf{t}) \cdot F(\mathbf{t})(y)}{(l(\mathbf{t})-1)!}+\mathcal{R}_n,
    \label{eq:RKExp}
\end{align}
Here, $\mathcal{R}_n$ is given by
\begin{align}
    \mathcal{R}_n(t_0+h,t_0)\coloneqq \frac{1}{(n+1)!} \sum_{\nu_1,\ldots,\nu_{n+1}\in\{0\}\cup\mathcal{M}} \theta_{\nu_1}(h)\cdots\theta_{\nu_{n+1}}(h) \frac{\partial^{n+1} f(\tilde{Y}_{t_0+\xi h})}{\partial \theta_{\nu_1}\cdots\partial \theta_{\nu_{n+1}}}
\end{align}
with some $\xi\in(0,1)$, and thus is of $O(h^{(n+1)/2})$.
$LTS(\Delta)$ is the set of trees generated by the rule.
For tree $\mathbf{t}$, $l(\mathbf{t})$ (resp. $s(\mathbf{t})$) is the number of nodes (resp. stochastic nodes) in it, $\Phi_S(\mathbf{t})$ is a random variable defined with $z^{(k,\nu)}_i$ and $z^{(k,\nu),(r,\mu)}_{ij}$, $F(\mathbf{t}):\mathbb{R}^d\rightarrow\mathbb{R}$ is defined with $f$, $a$, $b$, and their derivatives, and $j_1,\ldots,j_{s(\mathbf{t})}$ are $s(\mathbf{t})$ indices of stochastic nodes in $\mathbf{t} \in LTS(\Delta)$.
For their exact definitions, see \cite{Rossler01032006}.

\begin{figure}[t]
\begin{center}
\begin{minipage}[b]{0.49\columnwidth}
    \centering
\begin{tikzpicture}
    \node[draw, shape=circle,minimum width=0.1,label=left:1] (v1) at (0,0) {};
    \node[draw, shape=circle,minimum width=1em, fill=black,label=left:2] (v2) at (-1,1) {};
    \node[draw, shape=circle,minimum width=1em,label=left:3] (v3) at (1,1) {};
    \node[draw, shape=circle,minimum width=1em,label=left:4] (v4) at (-1,2) {};
    \node[below right=-0.15cm of v3] {\footnotesize{$j_1$}};
    \node[below right=-0.15cm of v4] {\footnotesize{$j_2$}};

    \draw [thick]
    (v1) -- (v2) node {}
    (v1) -- (v3) node {}
    (v2) -- (v4) node {};
    \draw [thick] (-0.11,-0.11) -- (0.11,0.11);
    \draw [thick] (-0.11,0.11) -- (0.11,-0.11);
\end{tikzpicture}
\end{minipage}
\begin{minipage}[b]{0.49\columnwidth}
    \centering
\begin{tikzpicture}
    \node[draw, shape=circle,minimum width=0.1,label=left:1] (v1) at (0,0) {};
    \node[draw, shape=circle,minimum width=1em,label=left:2] (v2) at (0,1) {};
    \node[draw, shape=circle,minimum width=1em,label=left:3] (v3) at (1,2) {};
    \node[draw, shape=circle,minimum width=1em, fill=black,label=left:4] (v4) at (-1,2) {};
    \node[below right=-0.15cm of v2] {\footnotesize{$j_1$}};
    \node[below right=-0.15cm of v3] {\footnotesize{$j_2$}};

    \draw [thick]
    (v1) -- (v2) node {}
    (v2) -- (v3) node {}
    (v2) -- (v4) node {};
    \draw [thick] (-0.11,-0.11) -- (0.11,0.11);
    \draw [thick] (-0.11,0.11) -- (0.11,-0.11);
\end{tikzpicture}
\end{minipage}

\caption{Two examples of stochastic rooted trees with $j_1,j_2\in[m]$, taken from \cite{Rossler01032006}.}
\label{fig:tree}
\end{center}
\end{figure}

On the other hand, it is known that for $p\in\mathbb{N}_0$ and sufficiently smooth $f$, $a$, and $b$, the following expansion holds:
\begin{align}
    \mathbb{E}[f(Y_{t_0+h})]=\sum_{\substack{\mathbf{t} \in LTS(I) \\ \rho(\mathbf{t}) \le p}} \sum_{j_1,\ldots,j_{s(\mathbf{t})/2}=1}^m \frac{F(\mathbf{t})(y)}{2^{s(\mathbf{t})/2}\rho(\mathbf{t})!}h^{\rho(\mathbf{t})}+O(h^{p+1}).
    \label{eq:STaylor}
\end{align}
Here, $LTS(I)$ is a certain set of trees, and for tree $\mathbf{t}$, $\rho(\mathbf{t})=d(\mathbf{t})+\frac{1}{2}s(\mathbf{t})$ with $d(\mathbf{t})$ being the number of deterministic nodes, and $j_1,\ldots,j_{s(\mathbf{t})/2}$ are $s(\mathbf{t})/2$ different indices of stochastic nodes in $\mathbf{t} \in LTS(I)$.

Given the above facts, the strategy for designing a SRK method with weak order $p$ is as follows: based on Eqs.~\eqref{eq:RKExp} and \eqref{eq:STaylor}, we choose the parameters $\alpha_i$, ${\gamma^{(\iota)}_{i}}^{(k,\nu)}$, $A^{(k,\nu),(0,0)}_{ij}$, and ${B^{(\iota)}_{ij}}^{(k,\nu),(r,\mu)}$, and random variables $Z$ so that $\mathbb{E}[f(\tilde{Y}_{t_0+h})]$ agrees with $\mathbb{E}[f(Y_{t_0+h})]$ up to terms of $O(h^{p+1})$.
In fact, some kinds of SRK methods have been proposed.
One example is RI1WM proposed in \cite{Rossler01032006}.
It has three stages and weak order $2$.
$\mathcal{M}$ is set as $\mathcal{M}=\{\{j_1\},\{j_1,j_2\} : j_1,j_2\in[m]\}$.
The parameters are specified with the Butcher array
\renewcommand{\arraystretch}{2}
\begin{equation*}
\begin{tabular}{c|c|c|c}
    $c^{(0,0)}$ & ${A}^{(0,0),(0,0)}$ & ${B^{(k)}}^{(0,0),(k,k)}$ & \\
    \hline
    $c^{(k,k)}$ & ${A}^{(k,k),(0,0)}$ & ${B^{(0)}}^{(k,k),(k,k)}$ & ${B^{(0)}}^{(k,l),(l,l)}$ \\
    \hline
    \multirow{2}{*}{}& $\alpha^T$ & ${{\gamma^{(k)}}^{(k,k)}}^T$ & ${{\gamma^{(k,k)}}^{(k,k)}}^T$ \\
    \cmidrule{2-4}
    & & ${{\gamma^{(k)}}^{(k,l)}}^T$ & ${{\gamma^{(k,l)}}^{(k,l)}}^T$
\end{tabular}
\end{equation*}
as
\begin{equation*}
\renewcommand{\arraystretch}{1.3}
\begin{tabular}{r|ccc|ccc|ccc}
    & & & & & & \\
    $0$ & & & & & & \\
    $\frac{2}{3}$ & $\frac{2}{3}$ & &  & $1$ & &  & \\
    $\frac{2}{3}$ & $-\frac{1}{3}$ & $1$ & & $0$ & $0$ &  & & \\
    \cmidrule{1-10}
    $0$ & & & & & & \\
    $1$ & $1$ & &  & $1$ & &  & $1$ \\
    $1$ & $1$ & $0$ & & $-1$ & $0$ &  & $-1$ & $0$ \\
    \hline
    & $\frac{1}{4}$ & $\frac{1}{2}$ & $\frac{1}{4}$ &
    $\frac{1}{2}$ & $\frac{1}{4}$ & $\frac{1}{4}$ & $0$ & $\frac{1}{2}$ & $-\frac{1}{2}$\\
    \cmidrule{2-10}
    & & & & $-\frac{1}{2}$ & $\frac{1}{4}$ & $\frac{1}{4}$ & $0$ &
    $\frac{1}{2}$ & $-\frac{1}{2}$
\end{tabular}
\end{equation*}
\ \\
\noindent The parameters not shown in this array are 0.
The random variables are set to the following discrete ones.
For $k\in[m]$, $\theta_k$ takes $\pm\sqrt{3h}$, $0$, and $-\sqrt{3h}$ with probability $\frac{1}{6}$, $\frac{2}{3}$, and $\frac{1}{6}$, respectively.
For $k,l\in[m]$, $\theta_{(k,l)}=\frac{1}{2}\left(\theta_k\theta_l+V_{k,l}\right)$, where $V_{k,l}$ independent of $\{\theta_k\}_{k\in[m]}$ is defined as follows: if $l \le k-1$, $V_{k,l}$ takes $\pm h$ with probability $\frac{1}{2}$, if $l = k$, $V_{k,k}=-h$, and if $l \ge k+1$, $V_{k,l}=-V_{l,k}$.
RI1WM has a weak order 2 if the following conditions hold:
\begin{enumerate}
    \item $\|a(t,x)\|+\|b(t,x)\| \le C(1+\|x\|)$ and $\|a(t,x)-a(t,y)\|+\|b(t,x)-b(t,y)\| \le C(\|x-y\|)$ hold for any $t\in[0,T]$ and $x,y\in\mathbb{R}^d$ with some real constant $C$.
    \item all the components in $a$ and $b$, along with $f$, belong to $C^{6}_P(\mathbb{R}^d, \mathbb{R})$.
\end{enumerate}
RI1WM is applicable even if the commutativity of the noise,
\begin{align}
 (b^{k}(t,x)\cdot\nabla)b^{l}(t,x) -  (b^{l}(t,x)\cdot\nabla)b^{k}(t,x)=0,  
\end{align}
does not hold, where $\nabla=(\partial/\partial x_1,\ldots,\partial/\partial x_d)^T$, although some kinds of SRK methods require the commutativity of the noise.

\section{Proofs \label{app:proof}}

Before proving Theorem \ref{th:main}, we present a lemma used in the proof.

\begin{lemma}
    Consider Problem \ref{prob:main}.
    Also consider another SDE
    \begin{align}
    d\tilde{X}_t=\sum_{k=1}^K \tilde{\gamma}_k(t) \alpha_k(X_t) dt + \sum_{k=1}^K \tilde{\gamma}_k(t) \beta_k(X_t) dW_t, \tilde{X}_0=X_0
    \label{eq:MVSDEPert}
    \end{align}
    where the functions $\tilde{\gamma}_1,\ldots,\tilde{\gamma}_K:[0,T]\rightarrow\mathbb{R}$ satisfy
    \begin{align}
        \max_{t \in [0,T]} |\tilde{\gamma}_k(t)-\gamma_k(t)|\le\delta
        \label{eq:gammadelta}
    \end{align}
    with some common $\delta\in \mathbb{R}_+$, and assume that there exists a unique solution of Eq.~\eqref{eq:MVSDEPert}.
    Then, for any $t\in[0,T]$ and Lipschitz continuous function $f$ such that finite $\mathbb{E}[f(\tilde{X}_t)]$ and $\mathbb{E}[f(X_t)]$ exist,
    \begin{align}
        \left|\mathbb{E}[f(\tilde{X}_t)]-\mathbb{E}[f(X_t)]\right|
        \le  2\sqrt{(t+m)d}KLU\delta\exp\left(2(t+m)td^2K^2U^4\right)
    \end{align}
    holds, where $L$ is $f$'s Lipshitz constant. 
    \label{lem:SDEPert}
\end{lemma}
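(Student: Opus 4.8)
The plan is to treat Eq.~\eqref{eq:MVSDEPert} and the SDE for $X_t$ in Problem~\ref{prob:main} as two ordinary (non-McKean--Vlasov) SDEs driven by the same Brownian motion $W_t$ and started from the same point $X_0$, whose coefficients differ only through the replacement of the prescribed deterministic time functions $\gamma_k$ by $\tilde{\gamma}_k$. Because the law-dependence has already been frozen into these functions, no fixed-point or nonlinearity issue enters and the statement reduces to a standard pathwise stability estimate, which I would establish by a Gronwall argument on the second moment of the difference process $\Delta_t\coloneqq\tilde{X}_t-X_t$. First I would write the Ito SDE satisfied by $\Delta_t$ and, in both the drift and the diffusion coefficient, split each difference into a \emph{Lipschitz part}, of the form $\sum_k\tilde{\gamma}_k(s)\bigl(\alpha_k(\tilde{X}_s)-\alpha_k(X_s)\bigr)$, and a \emph{perturbation part}, of the form $\sum_k\bigl(\tilde{\gamma}_k(s)-\gamma_k(s)\bigr)\alpha_k(X_s)$, and analogously for $\beta_k$.

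For the second step I would pass to the integral form $\Delta_t=\int_0^t A_s\,ds+\int_0^t B_s\,dW_s$, apply $(a+b)^2\le 2a^2+2b^2$, and bound the drift integral by Cauchy--Schwarz and the stochastic integral by the Ito isometry (using the Frobenius norm for the matrix-valued $B_s$), arriving at
\begin{align}
  \mathbb{E}\bigl[|\Delta_t|^2\bigr]\le 2t\int_0^t\mathbb{E}\bigl[|A_s|^2\bigr]\,ds+2\int_0^t\mathbb{E}\bigl[|B_s|^2\bigr]\,ds. \nonumber
\end{align}
The Lipschitz parts are controlled using the Lipschitz constants $dU$ and $d\sqrt{m}\,U$ noted after Assumption~\ref{ass:basis} together with $|\tilde{\gamma}_k|\le U$, producing contributions proportional to $\mathbb{E}[|\Delta_s|^2]$; the perturbation parts are controlled using the boundedness of $\alpha_k,\beta_k$ and $|\tilde{\gamma}_k-\gamma_k|\le\delta$ from Eq.~\eqref{eq:gammadelta}, producing a forcing term proportional to $\delta^2$. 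Collecting the factors of $t$, $m$, $d$, $K$ and $U$ from the two lines then gives an inequality $\mathbb{E}[|\Delta_t|^2]\le\alpha\int_0^t\mathbb{E}[|\Delta_s|^2]\,ds+\beta\delta^2$ with $\alpha=O\bigl(K^2d^2U^4(t+m)\bigr)$ and $\beta=O\bigl(K^2dU^2(t+m)\bigr)$.

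Finally I would apply the integral form of Gronwall's inequality (using $\Delta_0=0$, so there is no initial term) to obtain $\mathbb{E}[|\Delta_t|^2]\le\beta\delta^2 e^{\alpha t}$, take square roots, and close the argument with Jensen's inequality and the Lipschitz property of $f$,
\begin{align}
  \bigl|\mathbb{E}[f(\tilde{X}_t)]-\mathbb{E}[f(X_t)]\bigr|\le L\,\mathbb{E}\bigl[|\Delta_t|\bigr]\le L\sqrt{\mathbb{E}[|\Delta_t|^2]}, \nonumber
\end{align}
which yields a bound of the stated form, with the exponential rate $2(t+m)td^2K^2U^4$ arising as $\tfrac12\alpha t$. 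The main obstacle I anticipate is not conceptual but the bookkeeping of dimensional constants: one must carefully track the $\sqrt{d}$ and $\sqrt{m}$ produced by passing between Euclidean and Frobenius norms of the vector- and matrix-valued coefficients and their Lipschitz bounds, and ensure that the factor $t$ produced by Cauchy--Schwarz on the drift and the factor $m$ produced by the Ito isometry on the diffusion combine into the single prefactor and exponent $(t+m)$ recorded in the statement. A secondary point requiring attention is justifying the square-integrability of $A_s$ and $B_s$ needed for the Ito isometry and the interchange of expectation and integration, which follows from the uniform moment bounds assumed for the processes.
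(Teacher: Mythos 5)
Your proposal is correct and follows essentially the same route as the paper's proof: decompose $\tilde{X}_t-X_t$ into Lipschitz and perturbation parts for both drift and diffusion, bound $\mathbb{E}[|\tilde{X}_t-X_t|^2]$ via elementary inequalities, Cauchy--Schwarz, and the Ito isometry, apply Gr\"onwall's inequality with zero initial term, and conclude with the Lipschitz property of $f$ and Jensen's inequality. The only cosmetic difference is that you split the squared sum into two pieces before sub-splitting, while the paper applies $|w+x+y+z|^2\le 4(|w|^2+|x|^2+|y|^2+|z|^2)$ to all four pieces at once; the constants assemble identically either way.
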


\begin{proof}
    We can write $\tilde{X}_t - X_t$ as
    \begin{align}
        \tilde{X}_t - X_t
        =& \int_0^t \sum_{k=1}^K(\tilde{\gamma}_k(s)-\gamma_k(s))\alpha_k(\tilde{X}_s)ds 
        +\int_0^t \sum_{k=1}^K\gamma_k(s)(\alpha_k(\tilde{X}_s)-\alpha_k(X_s))ds \nonumber \\
        & + \int_0^t \sum_{k=1}^K (\tilde{\gamma}_k(s)-\gamma_k(s))\beta_k(\tilde{X}_s)dW_s + \int_0^t \sum_{k=1}^K \gamma_k(s)(\beta_k(\tilde{X}_s)-\beta_k(X_s)) dW_s.
    \end{align}
    We thus have a bound on $\Delta_t\coloneqq \mathbb{E}[|\tilde{X}_t - X_t|^2]$ as
    \begin{align}
        \Delta_t &\le 4\left[\mathbb{E}\left[\left|\int_0^t \sum_{k=1}^K\left(\tilde{\gamma}_k(s)-\gamma_k(s)\right)\alpha_k(\tilde{X}_s)ds\right|^2\right] \right.\nonumber \\
        &\qquad +\mathbb{E}\left[\left|\int_0^t \sum_{k=1}^K\gamma_k(s)\left(\alpha_k(\tilde{X}_s)-\alpha_k(X_s)\right)ds\right|^2\right] \nonumber \\
        &\qquad  + \mathbb{E}\left[\left|\int_0^t \sum_{k=1}^K\left(\tilde{\gamma}_k(s)-\gamma_k(s)\right)\beta_k(\tilde{X}_s)dW_s\right|^2\right] \nonumber \\
        &\qquad  +\left.\mathbb{E}\left[\left|\int_0^t \sum_{k=1}^K\gamma_k(s)(\beta_k(\tilde{X}_s)-\beta_k(X_s))dW_s\right|^2\right]\right],
    \end{align}
    where we use an inequality $|w+x+y+z|^2 \le 4(|w|^2+|x|^2+|y|^2+|z|^2)$ holds for any $x,y,z,w\in\mathbb{R}^d$.
    Let us bound each term in the right hand side.
    For the first term, we have
    \begin{align}
        \mathbb{E}\left[\left|\int_0^t \sum_{k=1}^K\left(\tilde{\gamma}_k(s)-\gamma_k(s)\right)\alpha_k(\tilde{X}_s)ds\right|^2\right] \le & \mathbb{E}\left[\left(\int_0^t \sum_{k=1}^K \left|\tilde{\gamma}_k(s)-\gamma_k(s)\right|\cdot \left|\alpha_k(\tilde{X}_s)\right|ds\right)^2\right] \nonumber \\
        \le & d(\delta UKt)^2,
    \end{align}
    where Eq.~\eqref{eq:gammadelta} and Assumption~\ref{ass:alphabetaBound} are used.
    For the second term, we see that
    \begin{align}
        \left|\int_0^t \sum_{k=1}^K\gamma_k(s)\left(\alpha_k(\tilde{X}_s)-\alpha_k(X_s)\right)ds\right|
        \le & \int_0^t \sum_{k=1}^K\left|\gamma_k(s)\right|\cdot \left|\alpha_k(\tilde{X}_s)-\alpha_k(X_s)\right|ds \nonumber \\
        \le & dKU^2\int_0^t \left|\tilde{X}_s-X_s\right|ds,
        \label{eq:2ndTermTemp}
    \end{align}
    from Assumption~\ref{ass:alphabetaBound} and $\alpha_k$'s Lipschitz continuity with Lipschitz constant $dU$, which follows from Assumption~\ref{ass:alphabetaBound}.
    Eq.~\eqref{eq:2ndTermTemp} implies that
    \begin{align}
        \mathbb{E}\left[\left|\int_0^t \sum_{k=1}^K\gamma_k(s)\left(\alpha_k(\tilde{X}_s)-\alpha_k(X_s)\right)ds\right|^2\right]
        \le & d^2K^2U^4 \mathbb{E}\left[\left(\int_0^t \left|\tilde{X}_s-X_s\right|ds\right)^2\right] \nonumber \\
        \le & d^2K^2U^4t \int_0^t \Delta_s ds,
    \end{align}
    where we use $\left(\int_0^t \left|\tilde{X}_s-X_s\right|ds\right)^2\le t \int_0^t \left|\tilde{X}_s-X_s\right|^2ds$, a kind of the Cauchy–Schwarz inequality.
    The third term is bounded as
    \begin{align}
        \mathbb{E}\left[\left|\int_0^t \sum_{k=1}^K(\tilde{\gamma}_k(s)-\gamma_k(s))\beta_k(\tilde{X}_s)dW_s\right|^2 \right]
        \le & \mathbb{E}\left[\int_0^t \left|\sum_{k=1}^K(\tilde{\gamma}_k(s)-\gamma_k(s))\beta_k(\tilde{X}_s)\right|^2ds \right] \nonumber \\
        \le & \mathbb{E}\left[\int_0^t \left(\sum_{k=1}^K\left|\tilde{\gamma}_k(s)-\gamma_k(s)\right|\cdot\left|\beta_k(\tilde{X}_s)\right|\right)^2ds \right] \nonumber \\
        \le & K^2\delta^2 dmU^2t.
    \end{align}
    Lastly, the fourth term is bounded as
    \begin{align}
        \mathbb{E}\left[\left|\int_0^t \sum_{k=1}^K\gamma_k(s)(\beta_k(\tilde{X}_s)-\beta_k(X_s))dW_s\right|^2\right] 
        \le & \mathbb{E}\left[\int_0^t \left|\sum_{k=1}^K\gamma_k(s)(\beta_k(\tilde{X}_s)-\beta_k(X_s))\right|^2ds\right] \nonumber \\
        \le & d^2mK^2 U^4 \int_0^t \Delta_s ds,
    \end{align}
    similarly to the second term.
    Combining the above bounds, we have
    \begin{align}
        \Delta_t \le 4(t+m)dK^2U^2\left[\delta^2t+dU^2\int_0^t \Delta_s ds\right].
    \end{align}
    Applying Gr\"onwall's inequality, we get
    \begin{align}
        \Delta_t \le 4(t+m)dK^2U^2\delta^2\exp\left(4(t+m)td^2K^2U^4\right).
    \end{align}
    This implies that
    \begin{align}
        \mathbb{E}[|\tilde{X}_t-X_t|] \le \sqrt{\Delta_t}\le 2\sqrt{(t+m)d}KU\delta\exp\left(2(t+m)td^2K^2U^4\right),
    \end{align}
    and for the Lipschitz continuous function $f$,
    \begin{align}
        \left|\mathbb{E}[f(\tilde{X}_t)]-\mathbb{E}[f(X_t)]\right|
        \le & L\mathbb{E}[|\tilde{X}_t-X_t|] \nonumber \\
        \le & 2\sqrt{(t+m)d}KLU\delta\exp\left(2(t+m)td^2K^2U^4\right).
    \end{align}

\end{proof}

\begin{proof}[Proof of Theorem \ref{th:main}]

Let us use induction.
Assume that for $i\in[n_{\rm t}]$, $\tilde{\gamma}_{k,i}$ in Eq.~\eqref{eq:gammatil} satisfies
\begin{align}
\max_{t \in [0,t_{i+1}]} |\tilde{\gamma}_{k,i}(t)-\gamma_k(t)|\le\epsilon,
\label{eq:gammaErr}
\end{align}
which holds for $i=0$ since
\begin{align}
    |\gamma_{k,0}-\gamma_k(t)|\le \max_{s\in[0,t_1]} |\gamma^\prime(s)| t \le Uh_{\rm I} \le \epsilon
\end{align}
holds by the mean value theorem.

We define a random process $X^{(i)}_t$ by the SDE
\begin{align}
dX^{(i)}_t=a_i(t,X^{(i)}_t) dt + b_i(t,X^{(i)}_t) dW_t
\label{eq:SDEApprox}
\end{align}
with $a_i$ and $b_i$ defined in Algorithm~\ref{alg:main}.
$\tilde{X}_i$ is an approximation of $X^{(i)}_{t_i}$ by the discretization method $F$.

We first consider the error of $\tilde{X}_i$ as an approximation of $X^{(i)}_{t_i}$ in terms of the expectation of $\varphi_k$ in a way similar to the proof of Theorem 9.1 in \cite{milstein2013numerical}.
We denote by $X^{(i),s,x}_t$ the solution of Eq.~\eqref{eq:SDEApprox} starting from the time $s$ with an initial value $x$.
We also define
\begin{align}
    u_{i+1,k}(t,x)\coloneqq \mathbb{E}[\varphi_k(X^{(i),t,x}_{t_{i+1}})]
\end{align}
for $t\in[0,t_{i+1}]$ and $x\in\mathbb{R}^d$.
Then, noting that
\begin{align}
    X^{(i),t_{j},\tilde{X}_j}_{t_{i+1}}=X^{(i),t_{j+1},X^{(i),t_{j},\tilde{X}_j}_{t_{j+1}}}_{t_{i+1}}
\end{align}
holds for any $j \le i-1$, we have
\begin{align}
    &\mathbb{E}\left[\varphi_k\left(X^{(i)}_{t_{i+1}}\right)\right]-\mathbb{E}\left[\varphi_k(\tilde{X}_{i+1})\right] \nonumber \\
    =&\mathbb{E}\left[\varphi_k\left(X^{(i),t_1,X^{(i),t_0,\tilde{X}_0}_{t_1}}_{t_{i+1}}\right)\right] + \sum_{j=1}^{i} \left(\mathbb{E}\left[\varphi_k\left(X^{(i),t_{j+1},X^{(i),t_{j},\tilde{X}_j}_{t_{j+1}}}_{t_{i+1}}\right)\right] -\mathbb{E}\left[\varphi_k\left(X^{(i),t_{j},\tilde{X}_{j}}_{t_{i+1}}\right)\right]\right) \nonumber \\
    &-\mathbb{E}\left[\varphi(\tilde{X}_{i+1})\right]  \nonumber \\
    =&\sum_{j=0}^{i-1} \left(\mathbb{E}\left[\varphi_k\left(X^{(i),t_{j+1},X^{(i),t_{j},\tilde{X}_j}_{t_{j+1}}}_{t_{i+1}}\right)\right]-\mathbb{E}\left[\varphi_k\left(X^{(i),t_{j+1},\tilde{X}_{j+1}}_{t_{i+1}}\right)\right]\right)+ \mathbb{E}\left[\varphi_k\left(X^{(i),t_{i},\tilde{X}_i}_{t_{i+1}}\right)\right] \nonumber \\
    &-\mathbb{E}\left[\varphi(\tilde{X}_{i+1})\right] \nonumber \\
    =&\sum_{j=0}^{i-1} \left(\mathbb{E}\left[\mathbb{E}\left[\varphi_k\left(X^{(i),t_{j+1},X^{(i),t_{j},\tilde{X}_j}_{t_{j+1}}}_{t_{i+1}}\right)~\middle|~X^{(i),t_{j},\tilde{X}_j}_{t_{j+1}}\right]\right]-\mathbb{E}\left[\mathbb{E}\left[\varphi_k\left(X^{(i),t_{j+1},\tilde{X}_{j+1}}_{t_{i+1}}\right)~\middle|~\tilde{X}_{j+1}\right]\right]\right) \nonumber \\
    &+ \mathbb{E}\left[\varphi_k\left(X^{(i),t_{i},\tilde{X}_i}_{t_{i+1}}\right)\right]-\mathbb{E}\left[\varphi(\tilde{X}_{i+1})\right] \nonumber \\
    =& \sum_{j=0}^{i-1} \left(\mathbb{E}\left[u_k\left(t_{j+1},X^{(i),t_{j},\tilde{X}_j}_{t_{j+1}}\right)\right]-\mathbb{E}\left[u_k\left(t_{j+1},\tilde{X}_{j+1}\right)\right]\right)+ \mathbb{E}\left[\varphi_k\left(X^{(i),t_{i},\tilde{X}_i}_{t_{i+1}}\right)\right]-\mathbb{E}\left[\varphi(\tilde{X}_{i+1})\right] \nonumber \\
    =& \sum_{j=0}^{i-1} \mathbb{E}\left[\mathbb{E}\left[u_k\left(t_{j+1},X^{(i),t_{j},\tilde{X}_j}_{t_{j+1}}\right)~\middle|~\tilde{X}_j\right]-\mathbb{E}\left[u_k\left(t_{j+1},\tilde{X}_{j+1}\right)~\middle|~\tilde{X}_j\right]\right]\nonumber \\
    &+\mathbb{E}\left[\mathbb{E}\left[\varphi_k\left(X^{(i),t_{i},\tilde{X}_i}_{t_{i+1}}\right)~\middle|~\tilde{X}_i\right]-\mathbb{E}\left[\varphi(\tilde{X}_{i+1})~\middle|~\tilde{X}_i\right]\right].
    \label{eq:weakErrOneStep}
\end{align}
According to \cite{milstein2013numerical}, under Assumption \ref{ass:alphabetaBound} and \ref{ass:basis}, $u_k(t,x)$ is in $\mathbb{C}^{2(\lceil p \rceil+1)}_P$ with respect to $x$.
Because of Assumption \ref{ass:basis}, $\varphi_k$ is also in $\mathbb{C}^{2(\lceil p \rceil+1)}_P$.
Then, since the method $F$ has weak order $p$,
\begin{align}
    &\left|\mathbb{E}\left[u_k\left(t_{j+1},X^{(i),t_{j},\tilde{X}_j}_{t_{j+1}}\right)~\middle|~\tilde{X}_j\right]-\mathbb{E}\left[u_k\left(t_{j+1},\tilde{X}_{j+1}\right)~\middle|~\tilde{X}_j\right]\right| \nonumber \\
    & \qquad\qquad\qquad\qquad\qquad\qquad\qquad\quad \le \kappa(1+|\tilde{X}_j|^{2r})h_j^{p+1}, \nonumber \\
    & \left|\mathbb{E}\left[\varphi_k\left(X^{(i),t_{i},\tilde{X}_i}_{t_{i+1}}\right)~\middle|~\tilde{X}_i\right]-\mathbb{E}\left[\varphi(\tilde{X}_{i+1})~\middle|~\tilde{X}_i\right]\right|\le \kappa(1+|\tilde{X}_i|^{2r})h_i^{p+1}
\end{align}
holds for the random processes $X^{(i),t_{j},\tilde{X}_j}_t,j=0,\ldots,i$ and their weak approximations $\tilde{X}_{j+1}$, with some $\kappa\in\mathbb{R}_+$ and $r\in\mathbb{N}$.
Using these along with the assumption $\mathbb{E}[|\tilde{X}_i|^{2r}]<\infty$ in Eq.~\eqref{eq:weakErrOneStep} yields
\begin{align}
    \left|\mathbb{E}\left[\varphi_k\left(X^{(i)}_{t_{i+1}}\right)\right]-\mathbb{E}\left[\varphi_k(\tilde{X}_{i+1})\right]\right|\le \kappa^\prime \sum_{j=0}^i h_j^{p+1}
\end{align}
with some $\kappa^\prime\in\mathbb{R}_+$, and by plugging Eq.~\eqref{eq:hIandII} into this and some algebra, we finally get
\begin{align}
    \left|\mathbb{E}\left[\varphi_k\left(X^{(i)}_{t_{i+1}}\right)\right]-\mathbb{E}\left[\varphi_k(\tilde{X}_{i+1})\right]\right|\le \frac{\epsilon}{12}.
    \label{eq:EphiErr1}
\end{align}

Suppose that all the runs of QMCI in Algorithm~\ref{alg:main} succeed.
We then have $\hat{\gamma}_{k,i}$ such that
\begin{equation}
    \left|\hat{\gamma}_{k,i+1}-\mathbb{E}\left[\varphi_k(\tilde{X}_{i+1})\right]\right|\le \frac{\epsilon}{12}.
    \label{eq:EphiErr2}
\end{equation}
Besides, under the induction assumption as Eq.~\eqref{eq:gammaErr} and the assumption that Eq.~\eqref{eq:ShortTimeAssum} holds, Lemma~\ref{lem:SDEPert} implies
\begin{align}
    \left|\gamma_{k}(t_{i+1})-\mathbb{E}\left[\varphi_k\left(X^{(i)}_{t_{i+1}}\right)\right]\right|\le \frac{\epsilon}{12}
    \label{eq:EphiErr3}
\end{align}
(note that for $\varphi_k$, the Lipschitz constant $L$ satisfies $L \le \sqrt{d}U$ under Assumption~\ref{ass:basis}).
Combining Eqs.~\eqref{eq:EphiErr1}, \eqref{eq:EphiErr2}, and \eqref{eq:EphiErr3} yields
\begin{align}
    \left|\hat{\gamma}_{k,i+1}-\gamma_{k}(t_{i+1})\right|\le  \frac{\epsilon}{4}.
    \label{eq:gammahatErr}
\end{align}

For $i\le n^{\rm I}_t-2$, we see that
\begin{align}
    |\tilde{\gamma}_{k,i+1}(t)-\gamma_k(t)| 
    \le & |\hat{\gamma}_{k,i+1}-\gamma_k(t_{i+1})| + |\gamma_k(t)-\gamma_k(t_{i+1})|  \nonumber \\
    \le & \frac{\epsilon}{4} + \max_{s\in[t_i,t_{i+1}]} |\gamma^\prime_k(s)| h_{\rm I}  \nonumber \\
    \le & \epsilon
\end{align}
holds for any $t\in[t_{i+1},t_{i+2}]$.
Here, Eq.~\eqref{eq:gammahatErr} and the mean value theorem are used at the second inequality, and Assumption~\ref{ass:gammaBound} and Eq.~\eqref{eq:hIandII} are used at the last inequality.
For $i\ge n^{\rm I}_t-1$, we get
\begin{align}
    \left|\gamma_{k}^\prime(t_{i+1})-\hat{\gamma}_{k,i+1}^\prime\right|
    = & \left|\gamma_{k}^\prime(t_{i+1})-\frac{\hat{\gamma}_{k,i+1}-\hat{\gamma}_{k,i^\prime}}{h_{\rm II}}\right| \nonumber \\
    \le & \left|\gamma_{k}^\prime(t_{i+1})-\frac{\gamma_k(t_{i+1})-\gamma_k(t_{i^\prime})}{h_{\rm II}}\right|+\frac{\left|\hat{\gamma}_{k,i+1}-\gamma_k(t_{i+1})\right|}{h_{\rm II}}+\frac{\left|\hat{\gamma}_{k,i^\prime}-\gamma_k(t_{i^\prime})\right|}{h_{\rm II}} \nonumber \\
    \le & \frac{1}{2}Uh_{\rm II}+\frac{\epsilon}{2h_{\rm II}},
    \label{eq:gammahatprErr}
\end{align}
where $i^\prime=i$ if $i\ge n^{\rm I}_t$ and $i^\prime=0$ if $i= n^{\rm I}_t-1$, using Taylor's Theorem, Assumption~\ref{ass:gammaBound}, and Eq.~\eqref{eq:gammahatErr} at the last inequality.
We thus have
\begin{align}
    |\tilde{\gamma}_{k,i+1}(t)-\gamma_k(t)| 
    =&|\hat{\gamma}_{k,i+1}^\prime(t-t_{i+1})+\hat{\gamma}_{k,i+1}-\gamma_k(t)| \nonumber \\
    \le & \left| \gamma_{k}^\prime(t_{i+1})(t-t_{i+1})+\gamma_k(t_{i+1})-\gamma_k(t) \right| \nonumber \\
    & \quad +\left|\left(\gamma_{k}^\prime(t_{i+1})-\hat{\gamma}_{k,i+1}^\prime\right)(t-t_{i+1})\right|+\left|\hat{\gamma}_{k,i+1}-\gamma_k(t_{i+1})\right| \nonumber \\
    \le & \frac{1}{2}Uh_{i+1}^2 + \left(\frac{1}{2}Uh_{\rm II}+\frac{\epsilon}{2h_{\rm II}}\right)h_{i+1}+\frac{\epsilon}{4} \nonumber \\
    \le & \epsilon
\end{align}
for $t\in[t_{i+1},t_{i+2}]$, where Taylor's theorem, Assumption~\ref{ass:gammaBound}, Eq.~\eqref{eq:gammahatErr} and Eq.~\eqref{eq:gammahatprErr} are used at the second inequality, and Eq.~\eqref{eq:hIandII} is used at the last inequality along with some algebra.
In summary, $|\tilde{\gamma}_{k,i+1}(t)-\gamma_k(t)|\le\epsilon$ holds for any $k\in[K]$ and $t\in[t_{i+1},t_{i+2}]$ whether $i\le n^{\rm I}_t-2$ or $i\ge n^{\rm I}_t-1$.
That is, the induction assumption also holds in the next step: Eq.~\eqref{eq:gammaErr} also holds when $i$ is replaced with $i+1$.

The induction implies that Eq.~\eqref{eq:gammahatErr} also holds for any $i=0,\ldots,n_{\rm t}-1$.
In particular, $\left|\hat{\gamma}_{k,n_{\rm t}}-\mathbb{E}\left[\varphi_k\left(X_T\right)\right]\right|\le \frac{\epsilon}{4}$ holds.
We also see that
\begin{align}
    \left|\hat{E}-\mathbb{E}\left[\phi\left(X_T\right)\right]\right|\le \frac{\epsilon}{4} < \epsilon
\end{align}
holds, since $\phi$ is also in $\mathbb{C}^{2(\lceil p \rceil+1)}_P$.
This concludes the proof on the accuracy.

Next, we evaluate the complexity of Algorithm~\ref{alg:main}.
Because of Theorem~\ref{th:QMCI}, QMCI in step 9 (resp. 11) in the algorithm makes
\begin{align}
    O\left(\frac{U}{\epsilon_{\rm QMCI}}\log^{3/2}\left(\frac{U}{\epsilon_{\rm QMCI}}\right)\log\log\left(\frac{U}{\epsilon_{\rm QMCI}}\right)\log\left(\frac{1}{\eta^\prime}\right)\right)
\end{align}
queries to $U_{\tilde{X}_{i+1}}$ and $U_{\varphi_k}$ (resp. $U_\phi$).
$U_{\tilde{X}_{i+1}}$ contains $i+1$ queries to $U_{F^h_{a,b,t}}$, and $i+1$ is at most $n_{\rm t}=n_{\rm t}^{\rm I} + n_{\rm t}^{\rm I}=\frac{h_{\rm II}}{h_{\rm I}} + \frac{T}{h_{\rm II}} -1$, which is $O\left(1/\epsilon^{1/p}\right)$ in the current setting.
Combining these and Eqs.~\eqref{eq:etaPr} and \eqref{eq:epsQMCI}, and leaving only the dependencies on $\epsilon$ and $\eta$, we obtain the query number bounds in Eq.~\eqref{eq:query1} and \eqref{eq:query2}.

Lastly, let us see the success probability of Algorithm~\ref{alg:main}.
The success probability of each QMCI in the algorithm is at least $1-\eta^\prime$.
The number of the QMCIs in the algorithm is $K(n_{\rm t}-1)+1$.
Thus, the probability that all of the QMCIs succeed, which means that Algorithm~\ref{alg:main} successfully outputs an $\epsilon$-approximation of $\mathbb{E}[\phi(X_T)]$, is at least
\begin{align}
(1-\eta^\prime)^{K(n_{\rm t}-1)+1} \ge 1-(K(n_{\rm t}-1)+1)\eta^\prime =  1-\eta.    
\end{align}

\end{proof}

\end{appendices}

\bibliography{reference}

\end{document}